\newcommand{\cp}[1]{\ifmmode {\mathcal{#1}}\else ${\mathcal{#1}}$\fi}
\newcommand{\bA}{\boldsymbol{A}}
\newcommand{\bB}{\boldsymbol{B}}
\newcommand{\bI}{\boldsymbol{I}}
\newcommand{\bK}{\boldsymbol{K}}
\newcommand{\bM}{\boldsymbol{M}}
\newcommand{\bW}{\boldsymbol{W}}
\newcommand{\bX}{\boldsymbol{X}}
\newcommand{\bY}{\boldsymbol{Y}}
\newcommand{\ba}{\boldsymbol{a}}
\newcommand{\bc}{\boldsymbol{c}}
\newcommand{\bm}{\boldsymbol{m}}
\newcommand{\be}{\boldsymbol{e}}
\newcommand{\by}{\boldsymbol{y}}
\newcommand{\bu}{\boldsymbol{u}}
\newcommand{\bx}{\boldsymbol{x}}
\newcommand{\psh}[2]{\langle{#1},{#2}\rangle_{\cp{H}}}
\newcommand{\mr}[1]{{\widetilde{\bm}_{{#1}}}}
\newcommand{\bxi}{\boldsymbol{\xi}}
\newcommand{\calC}{\mathcal{C}}
\newcommand{\calD}{\mathcal{D}}
\newcommand{\bbeta}{\boldsymbol{\beta}}
\newcommand{\bgamma}{\boldsymbol{\gamma}}
\newcommand{\bmu}{\boldsymbol{\mu}}
\newcommand{\bomega}{\boldsymbol{\omega}}
\newcommand{\bzeta}{\boldsymbol{\zeta}}
\newcommand{\bSigma}{\boldsymbol{\Sigma}}
\newcommand{\cb}[1]{\boldsymbol{#1}}
\newcommand{\tr}{\operatorname{tr}}
\newcommand{\Ex}{\operatorname{E}}
\newcommand{\Hess}{\operatorname{Hess}}
\newcommand{\Homogeneity}{\operatorname{Hom}}
\def\cred{\textcolor{red}}
\def\cblue{\textcolor{blue}}
\definecolor{darkgreen}{rgb}{0.0, 0.85, 0.0}
\newtheorem{theorem}{Theorem}
\title{A Blind Multiscale Spatial Regularization Framework for Kernel-based Spectral Unmixing}
\author{Ricardo~Augusto~Borsoi,~\IEEEmembership{Student Member,~IEEE,} Tales Imbiriba,~\IEEEmembership{Member,~IEEE,} Jos\'e~Carlos~Moreira~Bermudez,~\IEEEmembership{Senior~Member,~IEEE}, C\'edric Richard,~\IEEEmembership{Senior~Member,~IEEE}
\thanks{This work has been supported by the National Council for Scientific and Technological Development (CNPq) under grants 304250/2017-1, 409044/2018-0, 141271/2017-5 and 204991/2018-8, by the Foundation for Research Support of the State of Rio Grande do Sul (FAPERGS) under grant 19/2551-0001844-4, and by the Brazilian Education Ministry (CAPES) under grant PNPD/1811213.}
\thanks{The authors would like to thank Lucas Drumetz and his collaborators for providing part of the data used in the experimental section of the manuscript.}
\thanks{R.A. Borsoi is with the Department of Electrical Engineering, Federal University of Santa Catarina (DEE--UFSC), Florian\'opolis, SC, Brazil, and with the Lagrange Laboratory (CNRS, OCA), Universit\'e  C\^ote  d'Azur, Nice, France. e-mail: \mbox{raborsoi@gmail.com}.}
\thanks{T. Imbiriba was with DEE--UFSC, Florian\'opolis, SC, Brazil, and is with the ECE department of the Northeastern University, Boston, MA, USA. e-mail: \mbox{talesim@gmail.com}.}
\thanks{J.C.M. Bermudez is with the DEE--UFSC, Florian\'opolis, SC, Brazil, and with the Graduate Program on Electronic Engineering and Computing, Catholic University of Pelotas (UCPel) Pelotas, Brazil. e-mail: \mbox{j.bermudez@ieee.org}.}
\thanks{C. Richard is with the  Lagrange Laboratory (CNRS, OCA), Universit\'e  C\^ote  d'Azur,  Nice, France. e-mail: \mbox{cedric.richard@unice.fr}.}
\thanks{This paper has supplementary downloadable material available at http://ieeexplore.ieee.org., provided by the authors. The material includes more detailed experimental validations. Contact \mbox{raborsoi@gmail.com} for further questions about this work.}
\thanks{Manuscript received Month day, year; revised Month day, year.}
}
\begin{document}
\maketitle

\begin{abstract}

Introducing spatial prior information in hyperspectral imaging (HSI) analysis has led to an overall improvement of the performance of many HSI methods applied for denoising, classification, and unmixing.
Extending such methodologies to nonlinear settings is not always straightforward, specially for unmixing problems where the consideration of spatial relationships between neighboring pixels might comprise intricate interactions between their fractional abundances and nonlinear contributions.
In this paper, we consider a multiscale regularization strategy for nonlinear spectral unmixing with kernels. The proposed methodology splits the unmixing problem into two sub-problems at two different spatial scales: a coarse scale containing low-dimensional structures, and the original fine scale. The coarse spatial domain is defined using superpixels that result from a multiscale transformation. 
Spectral unmixing is then formulated as the solution of quadratically constrained optimization problems, which are solved efficiently by exploring their strong duality and a reformulation of their dual cost functions in the form of root-finding problems.
%
%
Furthermore, we employ a theory-based statistical framework to devise a consistent strategy to estimate all required parameters, including both the regularization parameters of the algorithm and the number of superpixels of the transformation, resulting in a truly blind (from the parameters setting perspective) unmixing method.
Experimental results attest the superior performance of the proposed method when comparing with other,  state-of-the-art, related strategies.




\end{abstract}

\allowdisplaybreaks


\begin{IEEEkeywords}
Hyperspectral data, multiscale, spatial regularization, nonlinear unmixing, kernel methods.
\end{IEEEkeywords}

\section{Introduction}

Modern remote sensing greatly relies on hyperspectral (HS) image analysis to retrieve information about surface materials in many applications such as agriculture, surveillance and space exploration~\cite{Bioucas-Dias-2013-ID307}. Specifically, reflectance measures can provide detailed information about the spectral signature of pure materials present on the surface of a target scene and their proportion for each pixel. Applications often, but not exclusively, associated with remote sensing trade poor spatial resolution for high spectral resolution due to physical limitations of imaging sensors and to the distance between the sensor and the target scene. Therefore, the measured reflectance of a given pixel is usually a mixture of the pure spectral signatures of materials existing in the corresponding area. 
\emph{Spectral unmixing} (SU) consists of extracting the pure component spectral signatures and their proportions (or abundances) for each pixel. The literature presents many mixing models to explain the observed reflectance as a mathematical function of the pure spectral components. The simplest form of such models is the \emph{linear mixing model} (LMM) which confines the observed reflectance vectors into a convex hull whose extremities are the pure component spectral signatures, therefore, called endmembers.
The LMM is effective in accurately modelling mixtures occurring in scenes where the materials of interest cover a large area with respect to the pixel size~\cite{Dobigeon-2014-ID322}. It however disregards more complex mixing phenomena such as non-linearity~\cite{Dobigeon-2014-ID322,Imbiriba2016_tip} and spectral variability~\cite{somers2011variabilityReview,imbiriba2018glmm,Borsoi_2018_Fusion,borsoi2019deepGun,borsoi2019EMlibManInterpVAE}, which often results in estimation errors being propagated throughout the unmixing process~\cite{Thouvenin_IEEE_TSP_2016_PLMM}.

Nonlinear interactions between materials occur in many scenes where there is complex radiation scattering among several endmembers, such as in some vegetation areas~\cite{Ray1996}. In such situations, nonlinear mixing models must be considered~\cite{Dobigeon-2014-ID322,heylen2014review}. Several nonlinear SU strategies have been proposed in the literature, which can be roughly divided between model-based and model-free methods. Most model-based nonlinear SU algorithms assume that the mixing process that occurs in the scene is known a priori~\cite{mustard1989photometric, Guilfoyle2001,Altmann_tip_2012,heylen2014review,Dobigeon-2014-ID322}. However, real mixing mechanisms can be very complex and prior knowledge about them is seldom available in practice. This led to the consideration of more flexible model-free nonlinear SU, which employ more flexible nonlinear mixing models that are able to represent generic functions. Prominent model-free strategies include the estimation of abundances as posterior class probabilities of a nonlinear classifier~\cite{Li_svm_2005}, the use of graph-based approximate geodesic distances~\cite{Heylen:2011kc,Heylen_2014}, and kernel-based algorithms~\cite{Wu2010,Li2012blind, Chen-2013-ID321, chen2013nonlinear, Altmann-2013-ID311}. Kernel-based methods provide non-parametric representations of functional spaces that are able to model arbitrary nonlinear mixtures~\cite{Wu2010,Li2012blind,heylen2014review,Dobigeon-2014-ID322, Chen-2013-ID321, chen2013nonlinear,ammanouil2016nonlinear}. This flexibility, allied to a good experimental performance has led to the wide application of kernel methods.

Despite the good results obtained with kernel-based unmixing methods~\cite{chen2013nonlinear}, most algorithms fail to explore the high spatial regularity associated to many real world scenes.
This property can be leveraged to improve the conditioning of the unmixing problem. Spatial regularization has already been shown to improve the performance of linear~\cite{shi2014StatialInfoReview,imbiriba2018ULTRA} and sparse~\cite{iordache2012sunsal_TV,feng2016adaptiveRegularizationParameterSparseHU} SU, as well as spectral-variability-aware SU~\cite{drumetz2016blindUnmixingELMM,imbiriba2018ULTRA_V,imbiriba2018glmm,borsoi2019icassp,hong2019augmentedLMMvariability,hong2018SULoRA_lowRankEnbeddingUnmixingVar}, which is closely connected to nonlinear~SU~\cite{drumetz2017relationshipsBilinearELMM}. 
However, spatial information has seldom been enforced in nonlinear unmixing algorithms, partly due to the challenges associated with more complex observation models.
For instance, a spatial clustering approach was used in~\cite{tang2018spatialRegNonlinearUnmixing} to divide the image into different groups of pixels. SU was then performed using the P-linear mixing model in a Bayesian framework with a unique set of regularization parameters for each group.
In~\cite{chen2014nonlinear2} a \emph{Total Variation} (TV) regularization was introduced in a regression-based kernel unmixing method~\cite{chen2013nonlinear}, and a variable splitting approach was then used to solve the resulting optimization problem.

The TV regularization has been widely used in many HS imaging tasks since it promotes smooth image reconstructions while still allowing for sharp discontinuities~\cite{iordache2012sunsal_TV}. However, TV regularization is not the most effective approach to extract spatial information from hyperspectral images. Regularization strategies exploiting nonlocal redundancy in images were recently considered for SU~\cite{wang2017centralizedNonlocalSparseUnmixing,yao2019nonlocalTV_NMF_unmixing}, leading to a better abundance estimation performance at the expense of an increase in computational complexity. Other works exploited the manifold structure in the hyperspectral data by using a graph-based regularization to connect the abundances of pixels that are similar with respect to \mbox{some metrics~\cite{lu2012manifoldRegularizedSparseNMF,ammanouil2015graphLaplacianRegUnmixing}.}

In~\cite{Borsoi_multiscale_lgrs_2018}, a multiscale spatial regularization approach was proposed for sparse spectral unmixing. The multiscale approach led to improved results and smaller computational complexity when compared to TV regularization. The unmixing problem was split into two simpler problems in different image domains defined using a multiscale transformation. This transformation groups image pixels into contiguous regions using (over)-segmentation strategies such as the superpixel decomposition~\cite{Borsoi_multiscale_lgrs_2018}. The multiscale regularization strategy was later extended in~\cite{Borsoi_multiscaleVar_2018} to consider SU accounting for spectral variability.
%
Despite the excellent results obtained with spatial regularization strategies, their performance usually depends on the careful selection of regularization parameters. This is specially important in multiscale strategies, which require a larger number of parameters. Determination of parameters for spatial regularization methods remains a challenging problem, and works applied to HSI are rare~\cite{feng2016adaptiveRegularizationParameterSparseHU,song2016regularizationParamEstimHSdeconvolution}.

In this paper, we propose a new multiscale spatial regularization approach for kernel-based nonlinear unmixing. Building upon the ideas proposed in~\cite{Borsoi_multiscale_lgrs_2018}, we employ a multiscale representation to divide the unmixing problem into two simpler problems in different scales. Though based on the same principle used in~\cite{Borsoi_multiscale_lgrs_2018}, devising kernel-based mixing models in multiple scales is more challenging than in the linear case. Moreover, we address the parameter adjustment problem differently from what has been done in previous multiscale SU formulations in~\cite{Borsoi_multiscale_lgrs_2018,Borsoi_multiscaleVar_2018}. In this work we reformulate the SU problem at multiple scales by statistically characterizing not only the algorithm reconstruction error in both scales, but also the inter-scale interaction between the abundances and the nonlinear mixing contributions across the coarse and fine image domains.
%
%
%
This formulation leads to physically motivated constraints which are leveraged to devise the \emph{Blind Multiscale Unmixing Algorithm for Nonlinear Mixtures} (BMUA-N), in which all the parameters are determined automatically from the observed data. Thus, the proposed strategy benefits from an improved quality without the need for \emph{ad hoc} parameter adjustment such as in TV-based works. 

We formulate the resulting unmixing problem as a sequence of two optimization problems with quadratic equality constraints. These non-convex problems are solved by reformulating the dual problems in the form of low-dimensional root finding problems, which can be solved in very few iterations using a multidimensional bisection algorithm. Moreover, we are able to prove that, under mild conditions, strong duality holds for these optimization problems, which guarantees the optimality of this approach.
%
%
%
%
Simulations with synthetic and real datasets illustrate the effectiveness of the proposed methodology in producing piecewise smooth solutions while preserving sharp discontinuities existing in the image. This leads to more accurate unmixing results when compared to TV-based strategies, with less computational complexity and without the need for \emph{ad hoc} parameter adjustment.


This manuscript is organized as follows. In Section~\ref{sec:kernel} we discuss the main concepts related to regression-based kernel unmixing. In Section~\ref{sec:mscale_nonl} we present the proposed kernel-based multiscale unmixing strategy. The automatic parameter setting methodology is presented in Section~\ref{sec:param_alg} and the solution for the proposed optimization problems is discussed in Section~\ref{sec:prob_opt_solution}. In Section~\ref{sec:superpixel_size_fnder} we propose a method for designing the multiscale transformation to yield spectral homogeneity. Section~\ref{sec:complexity} discusses the computational complexity of the proposed method.
Experimental results are presented and discussed in Section~\ref{sec:results}. They are followed by concluding remarks in Section~\ref{sec:conclusions}.

\section{Kernel-Based Unmixing}\label{sec:kernel}
\subsection{Kernel-based mixture model}


In this section we review the standard kernel-based mixture model introduced in~\cite{chen2013nonlinear} and discuss the main theoretical aspects of kernel machines. As in~\cite{chen2013nonlinear}, we assume that each $L$--band observed pixel $\by_n\in\mathbb{R}^L$ in an HSI can be modeled as a function of the endmember spectra as follows:
\begin{align} \label{eq:nonlin_model_i}
	y_{n,\ell} = \psi_{\ba_n} (\widetilde{\bm}_{\ell}) + e_{n,\ell} \,,\quad
    \ell=1,\ldots,L \,,
\end{align}
where $y_{n,\ell}$ is the $\ell$-th entry of vector $\by_n$, $\widetilde{\bm}_{\ell}\in\mathbb{R}^{1\times P}$ is the $\ell$-th row of the endmember matrix $\bM\in\mathbb{R}^{L\times P}$ with~$P$ spectral signatures of pure materials in the scene. Function $\psi_{\ba_n}$ is an unknown nonlinear function defining the interactions between endmember spectra parameterized by their fractional abundances $\ba_n\in\mathbb{R}^P$.  $e_{n,\ell}$ includes the observation noise and modeling errors.
%
The problem that arises is to find a functional $\psi_{\ba_n}$ that can accurately represent the different and complex types of light-endmember interactions often occurring in real scenes. Since the type of nonlinearity is rarely known in practice, a popular solution is to search for kernel-based smooth function representations whose parameters can be learned directly from the data~\cite{Dobigeon-2014-ID322, Imbiriba2016_tip}.

In~\cite{chen2013nonlinear,chen2014nonlinear2} the authors considered a semi-parametric kernel-based model consisting of a linear trend parameterized by the abundance vector plus an additive nonlinear fluctuation. The model, which allows the quantification of the abundance vectors during the unmixing process, is given by   
\begin{align} \label{eq:nonlin_model_ii}
	\psi_{\ba_n} (\widetilde{\bm}_{\ell}) = \widetilde{\bm}_{\ell} \ba_n 
    + \psi_n(\widetilde{\bm}_{\ell}) \,,
\end{align}
with $\psi_n:\mathbb{R}^P\to\mathbb{R}$ being an arbitrary smooth function belonging to a Reproducing Kernel Hilbert Space (RKHS) denoted by~$\mathcal{H}$ and defined over a nonempty compact set $\mathcal{M}\subset\mathbb{R}^P$.
%
%
%
This assumption allows for the kernel machinery (i.e., via the kernel trick) to obtain accurate solutions to the unmixing problem.



The theory of positive definite kernels emerged from the study of positive definite integral operators~\cite{Mercer1909}, and was further generalized in the study of positive definite matrices~\cite{Moore1916}. 
It has been established that, to every positive definite function $\kappa(\cdot,\cdot):\cp{M}\times \cp{M}\rightarrow\mathbb{R}$, defined over a non-empty compact set $\cp{M}\subset\mathbb{R}^P$, there corresponds one and only one class of real-valued functions on $\cp{M}$ forming a Hilbert space $\cp{H}$ endowed with a uniquely defined inner product $\psh{\cdot}{\cdot}$, and admitting~$\kappa$ as a \emph{reproducing kernel} (r.k.)~\cite{Aronszajn1950}. 
Space $\cp{H}$ is called a RKHS if its evaluation functional $\delta_{\widetilde{\bm}}$ is a linear and continuous (or equivalently bounded) functional for every $\widetilde{\bm}\in\cp{M}$, thus, admitting $\kappa$ as its unique kernel. As a consequence of the Riesz representation theorem~\cite[p. 188]{kreyszig1989introductory}, $\kappa(\cdot,\widetilde{\bm})$ is the representer of evaluation of any functional $\psi\in\cp{H}$, such that the \emph{reproducing property}
\begin{equation}
  \psi(\widetilde{\bm}) = \psh{\psi}{\kappa(\cdot, \widetilde{\bm})}
  \label{eq:reproducingProp}
 \end{equation}
 holds, for all $\psi\in\cp{H}$ and all $\widetilde{\bm}\in \cp{M}$. Furthermore, since $\kappa(\cdot, \widetilde{\bm})\in\cp{H}$ for all $\widetilde{\bm},\widetilde{\bm}'\in\cp{M}$ we also have
 \begin{equation}
 \kappa(\widetilde{\bm},\widetilde{\bm}') = \psh{\kappa(\cdot, \widetilde{\bm})}{\kappa(\cdot, \widetilde{\bm}')} \,.
 \end{equation}
The RKHS $\cp{H}$ is then formed by a class of functions generated by all functions of the form $\psi(\cdot)=\sum_{j}\alpha_j\kappa(\cdot,\widetilde{\bm}_j)$, with norm defined by $\|\psi\|^2_{\cp{H}} = \sum_{i}\sum_{j}\alpha_i\alpha_j\kappa(\widetilde{\bm}_i,\widetilde{\bm}_j)$.

In the context of machine learning, kernel methods are often related with the concept of building a high dimensional feature space $\cp{H}$, and a mapping
\begin{equation}
\begin{split}
	{\cb\Phi}:\,\, \cp{M}&\longrightarrow    \mathcal{H} \\
	 \widetilde{\bm} &\longmapsto   \cb{\Phi}(\widetilde{\bm}) \,,
\end{split}
\end{equation}
with inner product defined as $\kappa(\widetilde{\bm},\widetilde{\bm}')=\psh{\cb{\Phi}(\widetilde{\bm})}{\cb{\Phi}(\widetilde{\bm}')}$. If $\kappa$ is a r.k. of $\cp{H}$, then $\cp{H}$ is a RKHS and also a feature space of $\kappa$ with $\cb{\Phi}(\widetilde{\bm})=\kappa(\cdot, \widetilde{\bm})$. In this case $\cb{\Phi}$ is called the \emph{canonical feature map}~\cite[p.  120]{steinwart2008support}. 
This leads to the so-called ``\emph{kernel trick}'' allowing one to compute inner products of data mapped into higher, or even infinite, dimensional feature spaces by evaluating a real function $\kappa(\widetilde{\bm}_{i},\widetilde{\bm}_{j})$ in the input space.

Although the literature proposed a variety of kernel functions elaborated during the past two decades of intense research activity~\cite{Vapnik1995, ScholkopfBook:2001, Rasmussen-2006-ID292}, in this work we restrain ourselves to the polynomial kernel due to its intimate relation with multiple scattering phenomena known to exist in the interaction between light and the materials in the scene. Thus, the polynomial kernel is given by
\begin{equation} \label{eq:polyKernel}
    \kappa(\widetilde{\bm}_{i},\widetilde{\bm}_{j}) = (\widetilde{\bm}^\top_{i}\widetilde{\bm}_{j} + c)^d \,,
\end{equation}
where $d$ is the polynomial degree and $c\geq 0$ is a real number. 
Due to relevant findings reported in~\cite{Somers:2009p6577} concerning the order of multiple reflection models and the good results obtained in~\cite{chen2014nonlinear2}, in this paper we assume $d=2$ and $c=1$ in all simulations.

\subsection{LS-SVR-based unmixing} \label{sec:LSSVR_unmixing}

In~\cite{chen2013nonlinear} the authors proposed to solve the unmixing problem accounting for the model in~\eqref{eq:nonlin_model_i}--\eqref{eq:nonlin_model_ii} by considering a multi-kernel generalization of standard least-squares support vector regression (LS-SVR) methods~\cite{Suykens2002}. 
The resulting optimization problem is given by
\begin{align}
 \label{eq:optproblem_lssvrU}
    & (\hat{\ba}_n,\hat{\psi}_n) =  \,\,  \mathop{\arg\min}_{\ba_n,\,\psi_n} \,\, \frac{1}{2}\Big(\|\ba_n\|^2+\|\psi_n\|_\cp{H}^2+\frac{1}{\mu}\|\bxi_n\|_2^2\Big)
    \\
    &\text{subject to}\,\,\, \ba_n \geq \cb{0} \,, \, \cb{1}^\top \ba_n = 1 \,, 
    \nonumber \\ \nonumber 
    &\qquad\qquad\,\,\, \xi_{n,\ell} = y_{n,\ell} - \,\ba_n^\top \mr{\ell} - \psi_n(\mr{\ell}), \,\, \ell=1,\ldots,L \,,
\end{align}
where $\hat{\ba}_n$ and $\hat{\psi}_n$ are the estimated abundance vector and nonlinear function for the $n-$th pixel.





Problem~\eqref{eq:optproblem_lssvrU} is solved using standard dual formulation based on the Lagrangian~\cite{chen2013nonlinear}. 
Although problem~\eqref{eq:optproblem_lssvrU} presents an effective way of modeling both the linear trend and the nonlinear mixing occurring in a given pixel, it fails to impose any smooth structure over the abundance estimation within neighboring pixels. 

Standard regularization approaches such as the TV have considered an additional term to the cost function in problem~\eqref{eq:optproblem_lssvrU} that penalizes spatial discontinuities in the abundance maps~\cite{chen2014nonlinear2}. However, as discussed in the introduction, this strategy is not the most effective in exploring spatial information contained in the image. Besides, it introduces an additional parameter that must be carefully tuned in order to achieve good performance.

In the following, we present a multiscale formulation for the nonlinear mixing model in~\eqref{eq:nonlin_model_i}--\eqref{eq:nonlin_model_ii} that enables us to better exploit the spatial regularity in HSIs, leading to improved results when compared to the TV-based strategy.
%
%
Compared to linear unmixing, the application of a multiscale formulation to kernel-based nonlinear unmixing leads to specific challenges that need to be addressed. This is specially true in the present work, as quadratic equality constraints need to be reformulated to allow for the automatic determination of the parameters, and thus for a blind algorithm as detailed in Section~\ref{sec:param_alg}.

\section{A multiscale nonlinear mixing model} \label{sec:mscale_nonl}

%

Traditional regularization approaches (e.g., Tikhonov or TV) introduce spatial regularity by promoting similarity between abundances at fixed spatial neighborhoods. Recently, more flexible approaches emerged exploring irregular and data-dependent image structures that can generally be described under a graph or manifold regularization framework~\cite{lu2012manifoldRegularizedSparseNMF,shuman2013signalProcessingGraphsReview,ammanouil2015graphLaplacianRegUnmixing}. In this case, a graph is first constructed to represent the similarity between pairs of pixels in the HSI using distance metrics that can be either defined explicitly~\cite{stevens2017graphConstructionHSIs} or derived indirectly using image (over)-segmentation methods such as superpixels, ultrametric contour maps (UCM), or binary partition trees (BPT)~\cite{achanta2012slicPAMI,arbelaez2006ultrametricContourMaps,veganzones2014hyperspectralSegmentationBPT}. Afterwards, abundances corresponding to pixels that are similar according to the graph can be constrained to have similar proportions in the SU problem, which preserves the geometric structure found in the HSI~\cite{lu2012manifoldRegularizedSparseNMF,ammanouil2015graphLaplacianRegUnmixing,wang2017superpixelSparseNMF}.
Despite providing a significant amount of flexibility, graph regularization can still lead to computationally costly SU problems. More importantly though, this approach does not provide a clear insight about the multiscale abundance interactions in a way that could motivate the design of the algorithms, which makes the selection of the regularization parameters difficult in practical scenarios.

Motivated by the results in~\cite{Borsoi_multiscale_lgrs_2018,Borsoi_multiscaleVar_2018}, we propose to introduce spatial information into SU by representing this problem separately in two spatial scales, which significantly reduces the computational complexity of SU. Moreover, this also makes the inter-scale interaction between the abundances explicit in the resulting optimization problem, which is essential for the theoretically principled parameter design strategy presented in Section~\ref{sec:param_alg}.
Specifically, we divide the SU problem into two consecutive steps. First, we represent the nonlinear mixing process in an approximation (coarse) spatial scale~($\mathcal{C}$) which preserves relevant inter-pixel spatial contextual information. Pixels in the coarse spatial scale can be unmixed independently from each other. The recovered coarse abundance maps are then mapped back to the original image domain ($\mathcal{D}$) and used as prior information to regularize the second unmixing process applied to the original image to promote spatial dependency between neighboring pixels.

\subsection{Unmixing in the coarse scale}
Denote the HSI and the abundance map for all pixels by $\bY=[\by_1,\ldots,\by_N]$ and $\bA=[\ba_1,\ldots,\ba_N]$, respectively.
We consider a dimensionality reduction transformation~$\bW \in \mathbb{R}^{N\times K}$, $K < N$, constructed based on relevant contextual inter-pixel information present in the observed image $\bY$, that maps both the HSI and the abundance map to the approximation domain. The transformed matrices are given by
\begin{align} \label{eq:decomposition_calC_i}
	\bY_{\!\calC} = \bY\bW \,, 
    \quad \bA_{\calC} = \bA \bW \,,
\end{align}
where~$\bY_{\!\calC}=[\by_{\calC_1},\ldots,\by_{\calC_K}] \in \mathbb{R}^{L\times K}$ and~$\bA_{\calC}=[\ba_{\calC_1},\ldots,\ba_{\calC_K}]  \in \mathbb{R}^{P\times K}$ are, respectively, the HSI and the abundance matrix in the coarse approximation scale.

Various methods can be used to construct the transformation~$\bW$. Specifically, $\bW$ must group pixels that are spatially adjacent and spectrally similar and must respect image borders by not grouping pixels corresponding to different image structures or features.
Following the same approach as in~\cite{Borsoi_multiscale_lgrs_2018,Borsoi_multiscaleVar_2018}, we consider the superpixel decomposition of the image $\bY$ for the transformation $\bW$. Besides satisfying the criteria outlined above, multiscale decompositions based on superpixel algorithms have shown excellent performance in SU considering both sparsity~\cite{Borsoi_multiscale_lgrs_2018} and variability of the endmembers~\cite{Borsoi_multiscaleVar_2018}.
Superpixel algorithms group image pixels into different spatially compact neighborhoods with similar spectral information~\cite{achanta2012slicPAMI}, decomposing the image into a set of contiguous homogeneous regions whose size and regularity are controlled by adjusting a set of parameters. Furthermore, the superpixel decomposition can be computed very efficiently by employing low-cost algorithms such as the SLIC~\cite{achanta2012slicPAMI}. 
The transformation $\bW$ is thus constructed such that $\bY\bW$ computes the superpixel decomposition of the image $\bY$, and returns the averages of all pixels inside each superpixel region.
%
The effect of the transformation $\bW$ on the Cuprite HSI is illustrated in Figure~\ref{fig:img_ex_sppx3}.
Note that, besides the superpixel decomposition, other image (over)-segmentation strategies can also be used to construct~$\bW$, such as e.g., UCM or BPT~\cite{arbelaez2006ultrametricContourMaps,veganzones2014hyperspectralSegmentationBPT}, which can also provide hierarchical (multiscale) representations of HSIs.


\begin{figure}
    \centering
    \includegraphics[width=0.7\linewidth]{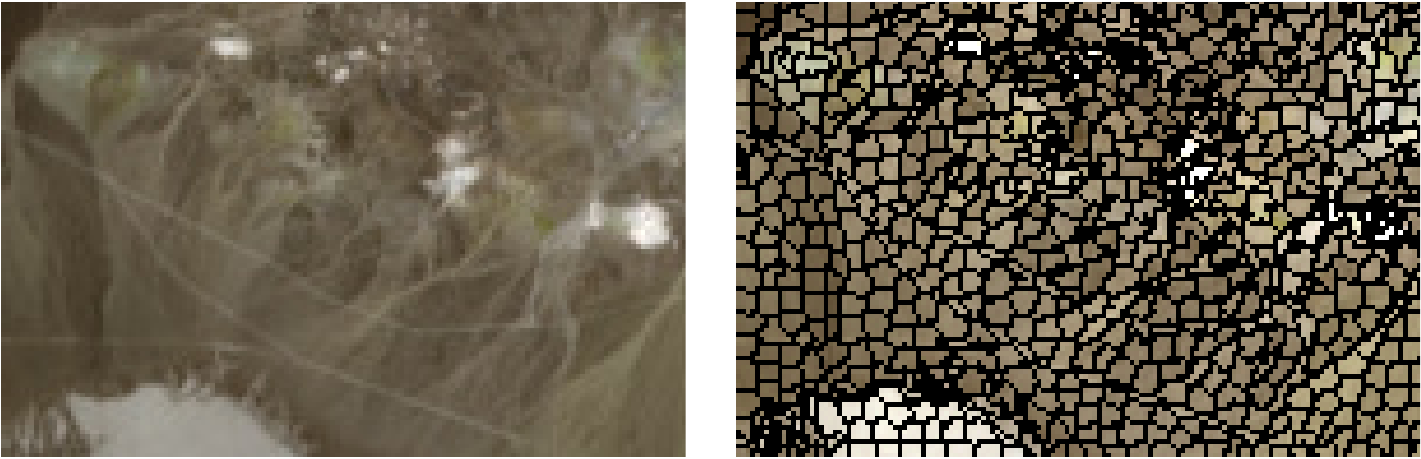}
    \vspace{-0.2cm}
    \caption{Cuprite image (left) and its superpixel decomposition (right).}
    \label{fig:img_ex_sppx3}
\end{figure}

Considering the nonlinear observation model~\eqref{eq:nonlin_model_i}, the transformed image in~\eqref{eq:decomposition_calC_i} leads to an equivalent mixing model in the coarse spatial domain, which is given by
\begin{align} \label{eq:nonl_model_coarse_scale}
	y_{\mathcal{C}_{i,\ell}}
    & = \frac{1}{|\mathcal{N}_i|}
    \sum_{n\in\mathcal{N}_i} \big(\widetilde{\bm}_{\ell} \ba_n 
    + \psi_n (\widetilde{\bm}_{\ell}) + e_{n,\ell} \big)
    \nonumber \\
    & = \widetilde{\bm}_{\ell}  \ba_{\mathcal{C}_i} + \psi_{\mathcal{C}_{i}}(\widetilde{\bm}_{\ell})
	+ \frac{1}{|\mathcal{N}_i|}\sum_{n\in\mathcal{N}_i} e_{n,\ell} \,,
\end{align}
where $y_{\mathcal{C}_{i,\ell}}$ is the $\ell$-th entry of $\by_{\mathcal{C}_{i}}$, $\mathcal{N}_i$ is the set of indexes of the pixels contained within the $i$-th superpixel, $|\cdot|$ denotes the cardinality of a set, and $\ba_{\mathcal{C}_i}$ and $\psi_{\mathcal{C}_{i}}$, given by
\begin{align} \notag
	\ba_{\mathcal{C}_i} {}={} \frac{1}{|\mathcal{N}_i|} \sum_{n\in \mathcal{N}_i} \ba_n
    \,, \quad
    \psi_{\mathcal{C}_{i}}(\widetilde{\bm}_{\ell}) = \frac{1}{|\mathcal{N}_i|} \sum_{n\in \mathcal{N}_i} \psi_n (\widetilde{\bm}_{\ell}) \,,
\end{align}
for $i=1,\ldots,K$, denote the fractional abundances and nonlinear contributions at the coarse spatial scale.

Following the observation model~\eqref{eq:nonl_model_coarse_scale}, the SU problem at the coarse spatial scale can be formulated using the LS-SVR framework presented in Section~\ref{sec:LSSVR_unmixing}, which leads to the following optimization problem:
\begin{align} \label{eq:khype_opt_coarse_constr}
    & \{\hat{\ba}_{\mathcal{C}_i},\hat{\psi}_{\mathcal{C}_i}\} =  \mathop{\arg\min}_{\{\ba_{\mathcal{C}_i},\psi_{\mathcal{C}_i},\bxi_{\mathcal{C}_i}\}} \,\,
    \frac{1}{2} \sum_{i=1}^K 
    \Big( \|\psi_{\mathcal{C}_i}\|_{\mathcal{H}}^2
    + \|\ba_{\mathcal{C}_i}\|_2^2 \Big)
    \\
    & \text{subject to }\, \ba_{\mathcal{C}_i} \geq \cb{0} \,, \,
    \cb{1}^\top\ba_{\mathcal{C}_i} = 1 \,,
    \,\,\, i=1,\ldots,K \,,
    \nonumber \\ & \hspace{1.6cm}
    \bxi_{\mathcal{C}_i} {}={} \by_{\mathcal{C}_i} - \bM \ba_{\mathcal{C}_i} - \psi_{\mathcal{C}_i}(\bM)
    ,\,\,\,\, i=1,\ldots,K \,,
    \nonumber \\
    \nonumber
    & \hspace{1.5cm} \frac{1}{K}\sum_{i=1}^K\|\bxi_{\mathcal{C}_i}\|_2^2 {}={} C_0 \,,
\end{align}
where $\hat{\ba}_{\mathcal{C}_i}$ and $\hat{\psi}_{\mathcal{C}_i}$ are the estimated abundance vector and nonlinear function for the $i-$th superpixel and $\psi_{\mathcal{C}_{i}}(\bM)=[\psi_{\mathcal{C}_{i}}(\widetilde{\bm}_{1}),\ldots,\psi_{\mathcal{C}_{i}}(\widetilde{\bm}_{L})]^\top$.
Parameter $C_0$ is a positive constant that constrains the reconstruction error of the algorithm, and operates in an analogous manner to a regularization. 
%
%
%
%
Differently from~\eqref{eq:optproblem_lssvrU}, in~\eqref{eq:khype_opt_coarse_constr} we choose to limit the reconstruction error using an equality constraint instead of directly adding $\|\bxi_{\mathcal{C}_i}\|_2^2$, $i=1,\ldots,K$ as a regularizer in the cost function. The rationale is that, unlike the regularization parameter $\mu$ in~\eqref{eq:optproblem_lssvrU}, the constant $C_0$ has a clear physical interpretation. This characteristic is exploited in the next section where we present a proper methodology for automatically setting~$C_0$.

\subsection{Unmixing in the image domain}
The abundance maps estimated at the coarse spatial scale, denoted by $\bA_{\mathcal{C}}=[\ba_{\mathcal{C}_1},\ldots,\ba_{\mathcal{C}_K}]$, can be used to regularize the original unmixing problem. To this end, we convert the abundance map from the coarse approximation domain $\mathcal{C}$ back to the original image domain $\mathcal{D}$ as
\begin{align} \label{eq:mscale_conjugate_transf}
	\,\widehat{\!\bA}_{\mathcal{D}}=\,\widehat{\!\bA}_{\mathcal{C}}\bW^* \,.
\end{align}
Matrix $\bW^\ast\in\mathbb{R}^{K\times N}$ is a conjugate transformation to $\bW$, and takes the image from the coarse domain $\calC$ back to the original (uniform) image domain. This is performed by attributing the value $\widehat{\ba}_{\mathcal{C}_i}$ to all pixels in $\,\widehat{\!\bA}_{\mathcal{D}}$ that lie within the $i$-th superpixel. Thus, $\,\widehat{\!\bA}_{\mathcal{D}}$ can be viewed as a coarse version of~$\,\widehat{\!\bA}$ in the original image domain. 
After computing $\,\widehat{\!\bA}_{\mathcal{D}}$ using~\eqref{eq:mscale_conjugate_transf}, the SU problem for all pixels is given by
\begin{align} \label{eq:khype_opt_finescale_constr}
	&  \{\hat{\ba}_{n},\hat{\psi}_{n}\} =  \, \mathop{\arg\min}_{\{\ba_{n},\psi_{n},\bxi_n\}} \,\,\, 
    \frac{1}{2} \sum_{n=1}^N \|\psi_{n}\|_{\mathcal{H}}^2
    \\
    & \text{subject to }\, \ba_{n} \geq \cb{0}, \,\, \cb{1}^\top\ba_{n} = 1 \,,
    \,\,\, n=1,\ldots,N \,,
    \nonumber \\ & \hspace{1.6cm}
    \bxi_{n} {}={} \by_{n} - \bM \ba_{n} - \psi_{n}(\bM) \,,
    \,\,\, n=1,\ldots,N \,,
    \nonumber \\
    & \hspace{1.5cm} \frac{1}{N}\sum_{n=1}^N\|\bxi_n\|_2^2 {}={} C_1 \,,
	\,\,\,
    \frac{1}{N}\sum_{n=1}^N
    \|\ba_n-\widehat{\ba}_{\mathcal{D}_n}\|_2^2 {}={} C_2 \,,
    \nonumber 
\end{align}
where $\hat{\ba}_n$ and $\hat{\psi}_n$ are the estimated abundance vector and nonlinear function for the $n-$th pixel in the original image domain, $\psi_{n}(\bM)=[\psi_{n}(\widetilde{\bm}_{1}),\ldots,\psi_{n}(\widetilde{\bm}_{L})]^\top$, and $C_1$, $C_2$ are positive constants that constrain the reconstruction error and the abundance variability across scales.
%
Again, as in~\eqref{eq:khype_opt_coarse_constr}, we use equality constraints instead of additive penalty terms in the cost function due to the easier interpretation of $C_1$ and $C_2$ when compared to regularization parameters. This improved interpretability is exploited in the next section to provide a methodology for automatically adjusting these constants. We note that the nonlinear equality constraints make the optimization problems \eqref{eq:khype_opt_coarse_constr} and~\eqref{eq:khype_opt_finescale_constr} non-convex. An efficient algorithm will be proposed in Section~\ref{sec:prob_opt_solution} to address this issue.



\section{Determining the Regularization Constants}
\label{sec:param_alg}

A significant challenge in regularized unmixing algorithms consists in determining the regularization constants. The proposed formulation requires the selection of $C_0$, $C_1$ and $C_2$ in problems~\eqref{eq:khype_opt_coarse_constr} and~\eqref{eq:khype_opt_finescale_constr}. 
%
Although most works assume that such constants can be determined empirically, several frameworks have been proposed to select regularization parameters in (ill-posed) inverse problems, such as the generalized cross validation~\cite{golub1979generalizedCrossValidationParameterEst}, risk estimation for reconstruction error minimization~\cite{deledalle2014parameterEstimationSUGAR,ammanouil2018adapt_parameterEstimation} or the L-curve~\cite{belge2002multipleParameterLcurve}. However, determining these parameters blindly for a given problem is usually difficult and computationally intensive. This motivates the consideration of external \emph{a priori} information whenever possible, which allows the application of conceptually simpler and more reliable solutions.

Classical approaches in this latter category are the Chi-squared method~\cite{hall1987chiSquaredParameterEstimationGCV,thompson1991studyParameterSelectionRegularizationPAMI,galatsanos1992regularizationParamChoiceImageRestoration} and the closely related discrepancy principle~\cite{engl1996regularizationInverseProbs}. The underlying idea behind these methods is to evaluate the statistical properties of the data reconstruction term in the cost function when the estimated solution is equal to the desired (i.e., true) parameters. For an observation model such as~\eqref{eq:nonlin_model_i}, evaluating the true parameters in the cost function of~\eqref{eq:optproblem_lssvrU} would result in the residuals $\bxi_n$ being equal to the observation noise $\be_n$ -- which, under a Gaussianity hypothesis, makes the reconstruction error Chi-squared-distributed, giving the technique its name. 
The regularization constant is then selected so that the estimated solution yields a residual with the same statistical properties of the observation noise, which are assumed to be known~\cite{hall1987chiSquaredParameterEstimationGCV,thompson1991studyParameterSelectionRegularizationPAMI,galatsanos1992regularizationParamChoiceImageRestoration}. This can be performed using equality constrained optimization problems such as in~\eqref{eq:khype_opt_coarse_constr} and~\eqref{eq:khype_opt_finescale_constr}~\cite{hunt1973parameterEstimationChiSquaredEqualityConstraint}, which immediately translates the choice of constants $C_1$, $C_2$ and $C_3$ as the problem of determining the statistical properties of the corresponding equality terms.

In this section, we will extend this strategy to the present SU problem by determining the statistical averages of the reconstruction error in both scales and the inter-scale abundance variation. 
Although in principle this requires knowledge about the true abundance solutions we want to estimate (which historically limited the applicability of the Chi-square technique to more complex problems like~\eqref{eq:khype_opt_finescale_constr}), we will be able to provide a theoretically sound and yet simple strategy for its reliable estimation.

More precisely, constant $C_0$ in~\eqref{eq:khype_opt_coarse_constr} reflects the average noise power in the coarse image scale. Constants $C_1$ and $C_2$ in~\eqref{eq:khype_opt_finescale_constr} reflect, respectively, the average noise power in the detail scale and the average energy of the differences between the fractional abundances and their estimates in the coarse domain. Under the Chi-squared framework, these constants are defined in terms of statistical means as
%
\begin{align} 
	C_0 & {}={} \Ex\bigg\{
    \frac{1}{K}\sum_{i=1}^K\|\by_{\mathcal{C}_i}-\bM\ba_{\mathcal{C}_i}
    -\psi_{\mathcal{C}_i}(\bM)\|_2^2 \bigg\}
    \nonumber \\
    & {}={} \Ex\bigg\{\frac{1}{K}\sum_{i=1}^K \|\be_{\mathcal{C}_i}\|_2^2 \bigg\} \,,
    \label{eq:constants_expectations0a}
    \\
    C_1 & {}={} \Ex\bigg\{\frac{1}{N}\sum_{n=1}^N\|\by_n-\bM\ba_n
    -\psi_n(\bM)\|_2^2 \bigg\}
    \nonumber \\
    & {}={} \Ex\bigg\{\frac{1}{N}\sum_{n=1}^N \|\be_{n}\|_2^2 \bigg\} \,,
    \label{eq:constants_expectations0b}
    \\
    C_2 & {}={} \Ex\bigg\{\frac{1}{N}\sum_{n=1}^N
    \|\ba_n-{\ba}_{\mathcal{D}_n}\|_2^2 \bigg\}  \,,
    \label{eq:constants_expectations0c}
\end{align}
where $\Ex\{\cdot\}$ is the expected value operator with respect to the true distribution of the parameters within the brackets, and $\be_n$ and $\be_{\calC_i}$ are the errors related to the fine and coarse domains of the $n$-th pixel and $i$-th superpixel, respectively. In this case, the quadratic constraints in problems~\eqref{eq:khype_opt_coarse_constr} and~\eqref{eq:khype_opt_finescale_constr} can be seen as approximations to the above expected values.

Constants $C_0$ and $C_1$ depend directly on the noise level and modeling errors represented by $\be_n$. We write
\begin{equation}
	\be_n = \be_{0,n} + \be_{\psi,n} \,,
\end{equation}
where $\be_{0,n}$ is a Gaussian noise term with zero mean and covariance $\bSigma_{\be}$, and $\be_{\psi,n}$ represents modeling errors. Before proceeding, we make the following assumptions:
\begin{enumerate}
	\item[\textsf{A1})] The additive noise is spatially uncorrelated, i.e., $\Ex\{\be_{0,n}^\top\be_{0,m}\}=0$, $\forall\,n\neq m$. 
    
    \item[\textsf{A2})] The noise and modeling errors $\be_{0,n}$ and $\be_{\psi,n}$ are uncorrelated. 
    
    \item[\textsf{A3})] The modeling errors $\be_{\psi,n}$ are assumed to be spatially correlated and approximately constant within each superpixel, that is:
  \begin{align} \label{eq:mdl_err_sppx_appr}
      \be_{\psi,n} \approx \be_{\psi,m} \,, \,\,\, 
      \forall\, m,n\in \mathcal{N}_i,\,\,\, i=1,\ldots,K
      \,.
  \end{align}
    This hypothesis is motivated by the spatial smoothness of both the abundances and the nonlinear contributions in the mixing model~\cite{altmann2015spatialBayesianNonlinearUnmixing,ammanouil2016spatialNonlinearUnmixing}.

	\item[\textsf{A4})] The expected value of the modeling error's norm is the same for all pixels, and is represented as 
    \begin{align}
    \Ex\big\{\|\be_{\psi,n}\|_2^2 \big\}
    = {\sigma_{\be,{\psi}}^2} \,,
    \quad n=1,\ldots,N
    \,.
    \end{align}
    
    \item[\textsf{A5})] The noise covariance matrix is the same for all image pixels, i.e., 
    \begin{align}
    \begin{split}
        \Ex\big\{\be_{0,i}\be_{0,i}^\top\big\} & = \Ex\big\{\be_{0,j}\be_{0,j}^\top\big\} \,, \,\, 1 \leq i,j \leq N 
        \\
        & = \bSigma_{\be} \,.
    \end{split} 
    \end{align}
    
    \item[\textsf{A6})] The vectors $\ba_n-\ba_{\mathcal{D}_n}$, $\bM^{\dagger}(\psi_n(\bM)  - \psi_{\mathcal{C}_n}(\bM)\big)$ and $\bM^{\dagger}\big(\be_n-\frac{1}{|\mathcal{N}_n|}\sum_{i\in\mathcal{N}_n}\be_i)$ are mutually uncorrelated and zero-mean.
    
\end{enumerate}
We denote the average size of each superpixel by $S=N/K$.

In the following, we will evaluate the expectations in~\eqref{eq:constants_expectations0a}--\eqref{eq:constants_expectations0c} in order to provide well-founded means to select the constants $C_0$, $C_1$ and $C_2$.

\subsection{Determining the constant $C_1$}

Using hypothesis \textsf{A1}, \textsf{A2}, \textsf{A4} and \textsf{A5}, constant $C_1$ can be computed as:
\begin{align} \label{eq:constant_C1}
	C_1 & {}={} \Ex\bigg\{\frac{1}{N}\sum_{n=1}^N \|\be_{0,n} + \be_{\psi,n}\|_2^2 \bigg\}
    \nonumber \\ & 
    {}\stackrel{\textsf{{\tiny (A2)}}}{=}{} \frac{1}{N}\sum_{n=1}^N \Big( \Ex\big\{\|\be_{0,n}\|_2^2\big\} 
    + \Ex\big\{\|\be_{\psi,n}\|_2^2 \big\} \Big)
    \nonumber \\ & 
    \hspace{-0.1cm} {}\stackrel{\textsf{{\tiny (A4,A5)}}}{=}{}  \tr\{\bSigma_e\} + {\sigma_{\be,{\psi}}^2}
    \,.
\end{align}




\subsection{Determining the constant $C_0$}

Constant $C_0$ can be derived in a similar way by assuming \textsf{A1}--\textsf{A4}:
\begin{align}
	C_0 & = \Ex\bigg\{\frac{1}{K}\sum_{i=1}^K \|\be_{\mathcal{C}_i}\|_2^2 \bigg\}
    %
    %
    \nonumber \\ & 
    = \Ex\bigg\{\frac{1}{K}\sum_{i=1}^K \Big\|\frac{1}{|\mathcal{N}_i|}\sum_{n\in\mathcal{N}_i}\big(\be_{0,n} + \be_{\psi,n}\big)\Big\|_2^2 \bigg\}
    \nonumber \\ & 
    = \Ex\bigg\{\frac{1}{K}\sum_{i=1}^K \frac{1}{|\mathcal{N}_i|^2} \Big(
    \Big\|\sum_{n\in\mathcal{N}_i}\be_{0,n}\Big\|_2^2
    + \Big\|\sum_{n\in\mathcal{N}_i}\be_{\psi,n}\Big\|_2^2
    \nonumber \\ & \hspace{10ex}
    + 2\sum_{n\in\mathcal{N}_i}\sum_{m\in\mathcal{N}_i} 
    \langle\be_{0,n},\be_{\psi,m}\rangle \Big)\bigg\}
    \nonumber \\ & 
    {}\stackrel{\textsf{{\tiny (A2)}}}{=}{} \frac{1}{K}\sum_{i=1}^K \frac{1}{|\mathcal{N}_i|^2}
    \sum_{n\in\mathcal{N}_i}\Ex\big\{\|\be_{0,n}\|_2^2\big\}
    \nonumber \\ & \hspace{10ex}
    + \frac{1}{K}\sum_{i=1}^K \frac{1}{|\mathcal{N}_i|^2} 
    \Ex\bigg\{\Big\|\sum_{n\in\mathcal{N}_i}\be_{\psi,n}\Big\|_2^2\bigg\} \,.
\end{align}
The modeling errors are not uncorrelated and zero mean in each superpixel.
By approximating $K^{-1}\sum_{i=1}^K |\mathcal{N}_i|^{-1}\simeq S^{-1}$ and using hypothesis \textsf{A3}, $C_0$ can be approximated as
\begin{align} \label{eq:constant_C0}
    C_0 & \simeq \frac{1}{S} \tr\{\bSigma_{\be}\}
    + \frac{1}{K}\sum_{i=1}^K \frac{1}{|\mathcal{N}_i|^2} 
    \sum_{n\in\mathcal{N}_i} 
    \Ex\Big\{|\mathcal{N}_i|\,\big\|\be_{\psi,n}\big\|_2^2\Big\}
    \nonumber \\ &
    \simeq \frac{1}{S} \tr\{\bSigma_{\be}\}
    + {\sigma_{\be,{\psi}}^2}
    \,.
\end{align}

Note that this shows that the modeling errors are more significant relative to the noise at the coarse spatial scale. This is because the contribution of the noise is reduced by a factor of $S$ whereas the modeling errors retain the same energy.


\subsection{Determining the constant $C_2$}

The constant $C_2$ is slightly more challenging to compute than the previous ones. We denote the left pseudo-inverse of $\bM$ as $\bM^{\dagger}\in\mathbb{R}^{P\times L}$, that is, $\bM^{\dagger}\bM=\bI_{\!P}$. Then, we can consider the following quantity:
\begin{align} \label{eq:expected_constr_aad_1}
	& \Ex\bigg\{\frac{1}{N}\sum_{n=1}^N
    \|\bM^{\dagger}(\by_n-\by_{\mathcal{D}_n})\|^2 \bigg\}
    \nonumber \\ &
    {}={} \Ex\bigg\{\frac{1}{N}\sum_{n=1}^N
    \Big\|\bM^{\dagger}\big( \bM\ba_n + \be_n + \psi_n(\bM) 
    \nonumber \\ & \hspace{1.4cm}
    - \bM\ba_{\mathcal{D}_n} - \psi_{\mathcal{C}_n^*}(\bM)-\frac{1}{|\mathcal{N}_n|}\sum_{i\in\mathcal{N}_n}\be_i \big)\Big\|^2
    \bigg\} \,.
\end{align}

The expectation on the right hand side of~\eqref{eq:expected_constr_aad_1} can be computed by using assumption \textsf{A6}. This assumption is reasonable since each of the terms in \textsf{A6} comprises fluctuations between the coarse and fine spatial scales, which can be expected to be of zero mean.
Thus, by noting that
\begin{align}
	\Ex\big\{\|\bM^{\dagger}(\bM\ba_n -\bM\ba_{\mathcal{D}_n})\|^2 \big\}
    {}={} 
    \Ex\big\{\|\ba_n -\ba_{\mathcal{D}_n}\|^2 \big\}
\end{align}
and using \textsf{A6}, we can write~\eqref{eq:expected_constr_aad_1} as
\begin{align} \label{eq:expected_constr_aad_2}
	& \hspace{-1.5ex}    \Ex\bigg\{\frac{1}{N}\sum_{n=1}^N
    \|\bM^{\dagger}(\by_n-\by_{\mathcal{D}_n})\|^2 \bigg\} = 
    \Ex\bigg\{\frac{1}{N}\sum_{n=1}^N\|\ba_n -\ba_{\mathcal{D}_n}\|^2 \bigg\}
    \nonumber \\ & \hspace{1.3cm}
    + \Ex\bigg\{\frac{1}{N}\sum_{n=1}^N\|\bM^{\dagger}(\psi_n(\bM)  - \psi_{\mathcal{C}_n}(\bM)\big)\|^2 \bigg\}
    \nonumber \\ &  \hspace{1.3cm}
    + \Ex\bigg\{\frac{1}{N}\sum_{n=1}^N\|\bM^{\dagger}\big(\be_n-\frac{1}{|\mathcal{N}_n|}\sum_{i\in\mathcal{N}_n}\be_i)\|^2 \bigg\} \,.
\end{align}

In the following, we will expand the last term on the right hand side of \eqref{eq:expected_constr_aad_2}. Using  \textsf{A1}--\textsf{A4}, the summand for for each pixel can be written as
\begin{align} \label{eq:regDerA_noise_term2}
	& \Ex\Big\|\bM^{\dagger}\Big( \be_n-\frac{1}{|\mathcal{N}_n|}\sum_{i\in\mathcal{N}_n}\be_i \Big)\Big\|^2
    \nonumber \\ & 
    = \Ex\Big\|\bM^{\dagger}\Big(\be_{0,n} + \be_{\psi,n}-\frac{1}{|\mathcal{N}_n|}\sum_{i\in\mathcal{N}_n} \big(\be_{0,i} + \be_{\psi,i}\big) \Big)\Big\|^2
    \nonumber \\
	& \stackrel{\textsf{{\tiny (A2)}}}{=} \Ex\Big\|\bM^{\dagger}\Big(\be_{0,n}-\frac{1}{|\mathcal{N}_n|}\sum_{i\in\mathcal{N}_n}\be_{0,i} \Big)\Big\|^2
    \nonumber \\ & \hspace{10ex}
    + \Ex\Big\|\bM^{\dagger}\Big(
    \be_{\psi,n}-\frac{1}{|\mathcal{N}_n|}\sum_{i\in\mathcal{N}_n}\be_{\psi,i}\Big)\Big\|^2 \,.
\end{align}
Using hypothesis \textsf{A3} and equation \eqref{eq:mdl_err_sppx_appr}, the second term of \eqref{eq:regDerA_noise_term2} can be approximated as
\begin{align}
	\Ex\Big\|\bM^{\dagger}\Big(
    \be_{\psi,n}-\frac{1}{|\mathcal{N}_n|}\sum_{i\in\mathcal{N}_n}\be_{\psi,i}\Big)\Big\|^2 \approx 0 \,.
\end{align}
This can be intuitively reasoned by accounting for the spatial correlation of the modeling errors, where $\be_{\psi,n}$ in each pixel is very similar to the average of the modeling errors in its respective superpixel. This leads to
\begin{align}
	& \Ex\Big\|\bM^{\dagger}\big(\be_n-\frac{1}{|\mathcal{N}_n|}\sum_{i\in\mathcal{N}_n}\be_i)\Big\|^2
    \nonumber \\ &
    \simeq \Ex\Big\|\bM^{\dagger}\Big(\be_{0,n}-\frac{1}{|\mathcal{N}_n|}\sum_{i\in\mathcal{N}_n}\be_{0,i} \Big)\Big\|^2
    \nonumber \\ &
    {}={} \Ex\Big\|\bM^{\dagger}\be_{0,n} \big(1-\frac{1}{|\mathcal{N}_n|}\big) - \bM^{\dagger}\frac{1}{|\mathcal{N}_n|}\sum_{i\in\mathcal{N}_n\setminus\{n\}}\be_{0,i}\Big\|^2
    \nonumber \\ &
    \stackrel{\textsf{{\tiny (A1)}}}{=} \Ex\Big\|\bM^{\dagger}\be_{0,n}\big(1-\frac{1}{|\mathcal{N}_n|}\big)\Big\|^2 
    + \Ex\Big\|\bM^{\dagger}\frac{1}{|\mathcal{N}_n|}\sum_{i\in\mathcal{N}_n\setminus\{n\}}\be_{0,i} \Big\|^2
    \nonumber \\ &
    \stackrel{\textsf{{\tiny (A1)}}}{=} \Ex\Big\|\bM^{\dagger}\be_{0,n}\big(1-\frac{1}{|\mathcal{N}_n|}\big)\Big\|^2 
    + \frac{1}{|\mathcal{N}_n|^2} \sum_{i\in\mathcal{N}_n\setminus\{n\}} 
    \Ex\|\bM^{\dagger}\be_{0,i} \|^2
    \nonumber \\ &
    \stackrel{\textsf{{\tiny (A5)}}}{=} \Ex\|\bM^{\dagger}\be_{0,n}\|^2 \big(1-\frac{1}{|\mathcal{N}_n|}\big)^2
    + \frac{|\mathcal{N}_n|-1}{|\mathcal{N}_n|^2}\Ex\|\bM^{\dagger}\be_{0,n}\|^2
    \nonumber \\ &
    {}={} \Ex\|\bM^{\dagger}\be_{0,n}\|^2 \frac{|\mathcal{N}_n|-1}{|\mathcal{N}_n|}
    \,,
\end{align}
where "$\setminus$" denotes the set difference operator.

Since the noise statistics are spatially invariant, see \textsf{A5},
\begin{align}
	\Ex \|\bM^{\dagger}\be_{0,n}\|^2   &   {}={}
    \Ex \big\{ \tr\{\bM^{\dagger} \be_n\be_n^\top (\bM^{\dagger})^\top \} \big\}
    \nonumber \\ & 
    {}={} \|\bM^{\dagger} \bSigma_{\be}^{1/2}\|_F^2
\end{align}
and, by approximating the superpixel sizes by their average value (i.e. $|\mathcal{N}_n|\simeq S$), we can approximate~\eqref{eq:regDerA_noise_term2} as
\begin{align}
	& \Ex\bigg\{\frac{1}{N}\sum_{n=1}^N\|\bM^{\dagger}\big(\be_n-\frac{1}{|\mathcal{N}_n|}\sum_{i\in\mathcal{N}_n}\be_i)\|^2 \bigg\}
    \nonumber \\ & \hspace{1.5cm}
    {}\simeq{} \|\bM^{\dagger} \bSigma_{\be}^{1/2}\|_F^2 \frac{S-1}{S} \,.
\end{align}



Finally, approximating the expectations with respect to $\ba_n$ and $\psi_n$ by their instantaneous values and using the estimates $\hat{\ba}_{\mathcal{D}_n}$ and $\hat{\psi}_{\mathcal{C}_n}$ obtained as solutions to the optimization problem~\eqref{eq:khype_opt_coarse_constr}, equation \eqref{eq:expected_constr_aad_2} can be approximated as
\begin{align} \label{eq:approx_2nd_eq_constraint}
	& \frac{1}{N}\sum_{n=1}^N \|\ba_n -\widehat{\ba}_{\mathcal{D}_n}\|^2 
    {}\simeq{} 
    \frac{1}{N}\sum_{n=1}^N \|\bM^{\dagger}(\by_n-\by_{\mathcal{D}_n})\| ^2
    \nonumber \\
    & - \frac{1}{N}\sum_{n=1}^N\|\bM^{\dagger}(\psi_n(\bM)  - \widehat{\psi}_{\mathcal{C}_n}(\bM)\big)\|^2
    \nonumber \\ & 
    - \|\bM^{\dagger} \bSigma_{\be}^{1/2}\|_F^2 \frac{S-1}{S} \,.
\end{align}

\subsection{The updated fine scale optimization problem}

Using these results, we can substitute the second quadratic equality constraint of problem~\eqref{eq:khype_opt_finescale_constr} by \eqref{eq:approx_2nd_eq_constraint}, resulting in the following problem
\begin{align} \label{eq:khype_opt_finescale_constr_2}
    &  \mathop{\arg\min}_{\{\ba_{n},\psi_{n},\bxi_n\}} \,\,\, 
    \frac{1}{2} \sum_{n=1}^N \|\psi_{n}\|_{\mathcal{H}}^2
    \\
    & \text{subject to} 
    \hspace{1ex} \ba_{n} \geq\cb{0} \,, \,\, \cb{1}^\top\ba_{n} = 1 \,, \,\,\, n=1,\ldots,N \,,
    \nonumber \\ 
    & \hspace{10ex} \bxi_{n} {}={} \by_{n} - \bM \ba_{n} - \psi_{n}(\bM) \,,
    \,\, n=1,\ldots,N \,,
    \nonumber \\
    & \hspace{10ex} \frac{1}{N}\sum_{n=1}^N\|\bxi_n\|_2^2 {}={} C_1 \,,
    \nonumber \\
    & \hspace{10ex}  \frac{1}{N}\sum_{n=1}^N
    \|\ba_n-\widehat{\ba}_{\mathcal{D}_n}\|_2^2 {}={} C_Y - C_E
    \nonumber \\ \nonumber
    & \hspace{10ex} - \frac{1}{N}\sum_{n=1}^N\|\bM^{\dagger}(\psi_n(\bM)  - \widehat{\psi}_{\mathcal{C}_n}(\bM)\big)\|^2 \,,
\end{align}
where constants $C_Y$ and $C_E$ are defined as
\begin{align} \label{eq:constants_CY_CE}
\begin{split}
	C_Y & {}={} \frac{1}{N}\sum_{n=1}^N \|\bM^{\dagger}(\by_n-\by_{\mathcal{D}_n})\|^2 \,,
    \\
    C_E & {}={} \|\bM^{\dagger} \bSigma_{\be}^{1/2}\|_F^2 \frac{S-1}{S} \,.
\end{split}
\end{align}

Defining
\begin{align}
    \bxi_{\psi,n} = \bM^{\dagger}\big(\psi_n(\bM)  - \widehat{\psi}_{\mathcal{C}_n}(\bM)\big), \,\,\, n=1,\ldots,N
\end{align}
and multiplying the quadratic constraints by $N/2$, we can represent problem~\eqref{eq:khype_opt_finescale_constr_2} equivalently as
\begin{align} 
    \label{eq:khype_opt_finescale_constr_3}
    & \mathop{\arg\min}_{\{\ba_{n},\psi_{n},\bxi_n,\bxi_{\psi,n}\}} \,\, 
    \frac{1}{2} \sum_{n=1}^N \|\psi_{n}\|_{\mathcal{H}}^2
    \\
    & \text{subject to \,}
    \ba_{n} \geq\cb{0} \,, \,\, \cb{1}^\top\ba_{n} = 1 \,, \,\,\, n=1,\ldots,N \,,
    \nonumber \\ 
    & \hspace{10ex} \bxi_{n} {}={} \by_{n} - \bM \ba_{n} - \psi_{n}(\bM) \,, \,\,\, n=1,\ldots,N \,,
    \nonumber \\
    & \hspace{10ex} \frac{1}{2}\sum_{n=1}^N\|\bxi_n\|_2^2  {}={} \frac{N}{2} C_1 \,,
    \nonumber \\
    & \hspace{10ex}  \frac{1}{2}\sum_{n=1}^N
    \!\big(\|\ba_n-\widehat{\ba}_{\mathcal{D}_n}\|_2^2 + \|\bxi_{\psi,n}\|^2 \big)
	{}={} \! \frac{N}{2} (C_Y - C_E) \,,
    \nonumber \\ \nonumber 
    & \hspace{10ex}  \bxi_{\psi,n} = \bM^{\dagger}\big(\psi_n(\bM)  - \widehat{\psi}_{\mathcal{C}_n}(\bM)\big) \,, \,\, n=1,\ldots,N \,.
\end{align}
Problem~\eqref{eq:khype_opt_finescale_constr_3} can now be used instead of~\eqref{eq:khype_opt_finescale_constr} to perform unmixing in the original spatial scale.

\section{Solving the optimization problems} \label{sec:prob_opt_solution}

The quadratic equality constraints in~\eqref{eq:khype_opt_coarse_constr} and~\eqref{eq:khype_opt_finescale_constr_3} make the optimization problems non-convex. Furthermore, the functional form of the variables $\psi_{\mathcal{C}_i}$ and $\psi_n$ makes them hard to be optimized in their primal form.
Thus, we resort to a Lagrangian relaxation and solve the dual optimization problem, which is concave and finite-dimensional~\cite{boyd2004cvxbook,ScholkopfBook:2001}. Although the non-convexity of the constraints implies the possibility of a non-zero duality gap, we will show in Section~\ref{sec:opt_strong_dual} that strong duality holds under mild conditions for problems~\eqref{eq:khype_opt_coarse_constr} and~\eqref{eq:khype_opt_finescale_constr_3}. Thus, this approach incurs no loss of performance.

\subsection{The coarse scale dual problem}
\label{sec:coarse_lagrangian}

The Lagrangian of~\eqref{eq:khype_opt_coarse_constr} is given by
\begin{align} \label{eq:lagrangian_coarsescale_i}
    \mathcal{J}_{\calC} {}={} \! & \sum_{i=1}^K \bigg\{
    \frac{1}{2} \|\psi_{\calC_i}\|_{\mathcal{H}}^2
    + \|\ba_{\calC_i}\|_2^2
    + \frac{\mu_0}{2}\Big(\|\bxi_{\calC_i}\|_2^2 - C_0\Big)
    \nonumber \\ &
    + \lambda_{\calC_i}(\cb{1}^\top\ba_{\calC_i} - 1)
    - \bgamma_{\calC_i}^\top\ba_{\calC_i}
    \nonumber \\ &
    - \bbeta_{\calC_i}^\top \big(\bxi_{\calC_i}-\by_{\calC_i} + \bM\ba_{\calC_i} +\psi_{\calC_i}(\bM) \big)
    \bigg\} \,,
\end{align}
where $\mu_0$, $\bbeta_{\calC_i}$, $\lambda_{\calC_i}$ and $\bgamma_{\calC_i}\geq\cb{0}$ are the Lagrange multipliers.
The optimality conditions with respect to the primal variables are given by
\begin{align}
\begin{split}
	\ba_{\calC_i}^* & {}={} \bM^\top \bbeta_{\calC_i} + \bgamma_{\calC_i} - \lambda_{\calC_i} \cb{1} \,,
	\\
	\psi_{\calC_i}^* & {}={} \sum_{\ell=1}^L \beta_{{\calC_i},\ell} \kappa(\cdot,\widetilde{\bm}_{\ell}) \,,
	\\
	\bxi_{\calC_i}^* & {}={} \frac{1}{\mu_0} \bbeta_{\calC_i} \,.
\end{split}
\end{align}

By replacing these solutions into the Lagrangian in~\eqref{eq:lagrangian_coarsescale_i}, we can derive the dual optimization problem, which is given by
\begin{align} \label{eq:dual_problem0_i}
\begin{split}
	& \max_{\mu_0} \, \max_{\bomega_{\calC}}  \,\,  \sum_{i=1}^K  \Big(
    \bomega_{\calC_i}^\top \bB_{\calC}(\mu_0) \, \bomega_{\calC_i}
    + \bc_{\calC_i} \bomega_{\calC_i} - \frac{\mu_0}{2}C_0 \Big)
    \\ &
    \text{subject to } \bgamma_{\calC_i}\geq \cb{0} \,, \,\,\, i=1,\ldots,K \,,
\end{split}
\end{align}
where $\bomega_{\calC}=\big[\bomega_1^\top,\ldots,\bomega_K^\top\big]^\top$, $\bomega_{\calC_i}=\big[\bbeta_{\calC_i}^\top,\bgamma_{\calC_i}^\top,\lambda_{\calC_i}\big]^\top$ is the vector of dual variables, and $\bB_{\calC}$ and $\bc_{\calC_i}$ are given by
\begin{align}
	\bB_{\calC}(\mu_0) {}={} -\frac{1}{2}
    \left[\begin{array}{ccc}
    \bK+\frac{1}{\mu_0}\bI + \bM\bM^\top   &   \bM    &    -\bM\cb{1} \\
    \bM^\top   &   \bI   &   -\cb{1} \\
    -\cb{1}^\top\bM^\top   &   -\cb{1}^\top   &   P
    \end{array}\right] \,,
\end{align}
\begin{align}
	\bc_{\calC_i} {}={} 
    \left[\begin{array}{c|c|c}
    	\by_{\calC_i}^\top   &  \cb{0} & -1
    \end{array}\right]\,.
\end{align}

\subsection{The fine scale dual problem}
\label{sec:fine_lagrangian}

The Lagrangian of~\eqref{eq:khype_opt_finescale_constr_3} is given by
\begin{align} \label{eq:lagrangian_finescale_i}
    \mathcal{J}_{\calD} & {}={}  \!  \sum_{n=1}^N \bigg\{
    \frac{1}{2} \|\psi_{n}\|_{\mathcal{H}}^2 
    + \frac{\mu_1}{2}\Big(\|\bxi_n\|_2^2 - C_1\Big)
    + \lambda_n(\cb{1}^\top\ba_{n} - 1)
    \nonumber \\ &
    + \frac{\mu_2}{2} \Big(
    \|\ba_n-\widehat{\ba}_{\mathcal{D}_n}\|_2^2 - C_Y
    + \|\bxi_{\psi,n}\|^2 
    + C_E \Big)
    \nonumber \\ &
    + \bmu_{3,n}^\top \Big(\bM^{\dagger}(\psi_n(\bM)  - \widehat{\psi}_{\mathcal{C}_n}(\bM)\big) - \bxi_{\psi,n} \Big)
    \nonumber \\ &
    - \bbeta_{n}^\top \big(\bxi_{n}-\by_{n} + \bM\ba_n +\psi_n(\bM) \big)
    %
    - \bgamma_n^\top\ba_n
    \bigg\} \,,
\end{align}
where $\mu_1$, $\mu_2$, $\bmu_{3,n}$, $\bbeta_n$, $\bgamma_n$ and $\lambda_n$ are the Lagrange multipliers. 
Differentiating the optimality conditions with respect to the primal variables and equating the result to zero we obtain
\begin{align}
\begin{split}
	&\ba_n^* {}={} \widehat{\ba}_{\mathcal{D}_n} + \frac{1}{\mu_2} \Big( 
    \bM^\top \bbeta_n + \bgamma_n - \lambda_n \cb{1} \Big) \,,
	\\
	&\psi_n^*  {}={} \sum_{\ell=1}^L \beta_{n,\ell} \kappa(\cdot,\widetilde{\bm}_{\ell}) 
    - \sum_{\ell=1}^L \big[[\bmu_{3,n}]^\top\bM^{\dagger}\big]_{\ell} 
    \,\kappa(\cdot,\widetilde{\bm}_{\ell}) \,,
	\\
	&\bxi_{n}^*  {}={} \frac{1}{\mu_1} \bbeta_n \,,
    \\
	&\bxi_{\psi,n}^{*}  {}={} \frac{1}{\mu_2} \bmu_{3,n} \,,
\end{split}
\end{align}
where $[\,\cdot\,]_{\ell}$ denotes the $\ell$-th position of a vector.

Substituting the solution to the primal problem in the Lagrangian, we obtain the following dual problem
\begin{align} \label{eq:dual_problem2_i}
	& \max_{\mu_1,\mu_2} \, \max_{\bomega} \,\,\, 
    \sum_{n=1}^N \Big( \bomega_n^\top \bB(\mu_1,\mu_2) \bomega_n + \bc_n \bomega_n  \Big)
    \nonumber \\ & \hspace{11ex}
    - \frac{N}{2} \big(\mu_1 C_1 + \mu_2 C_Y - \mu_2 C_E \big)
    \\ &
    \text{subject to } \bgamma_n \geq \cb{0} \,, \,\,\, n=1,\ldots,N \,,
    \nonumber
\end{align}
where $\bomega = \big[\bomega_1^\top,\ldots,\bomega_N^\top\big]^\top$ is a vector containing the dual variables, with entries given by $\bomega_n = \big[\bbeta_n^\top,\,\bmu_{3,n}^\top,\,\bgamma_n^\top,\,\lambda_n^\top\big]^\top$. The terms $\bB(\mu_1,\mu_2)$ and $\bc_n$ are defined in \eqref{eq:dual_prob_mtx_i}.

\begin{figure*} [htb]
\begin{align} \label{eq:dual_prob_mtx_i}
\begin{split}
    & \bB(\mu_1,\mu_2)  {}={}
    - \frac{1}{2} 
    \left[\begin{array}{cccc}
    \bK + \frac{1}{\mu_1}\bI + \frac{1}{\mu_2} \bM\bM^\top  & -\bK\big(\bM^{\dagger}\big)^\top   &   \frac{1}{\mu_2}\bM   &   -\frac{1}{\mu_2}\bM\cb{1} \\
    -\bM^{\dagger}\bK   &   \frac{1}{\mu_2}\bI + \bM^{\dagger} \bK \big(\bM^{\dagger}\big)^\top   &  \cb{0}  &  \cb{0} \\
    \frac{1}{\mu_2} \bM^\top   &  \cb{0}  &  \frac{1}{\mu_2}\bI  &  -\frac{1}{\mu_2}\cb{1} \\
    - \frac{1}{\mu_2} \cb{1}^\top\bM^\top   &   \cb{0}  &  -\frac{1}{\mu_2}\cb{1}^\top  &  \frac{1}{\mu_2} P
    \end{array}\right] \,, 
    \quad 
    \bomega_n {}={}
    \left[\begin{array}{c}
    	\bbeta_n \\ \bmu_{3,n} \\ \bgamma_n \\ \lambda_n
    \end{array}\right] \,,
    \\[0.15cm]
    & \hspace{3.5cm} 
    \bc_n  {}={} 
    \left[\begin{array}{c|c|c|c}
    	-\ba_{\mathcal{D}_n}^\top \bM^\top + \by_n^\top
        & - \psi_{\mathcal{C}_n}(\bM)^\top \big(\bM^{\dagger}\big)^\top
        & - \ba_{\mathcal{D}_n}^\top
        &  \cb{1}^\top \ba_{\mathcal{D}_n} -1 
    \end{array}\right] \,.
\end{split}
\end{align}
\end{figure*}


Although~\eqref{eq:dual_problem0_i} and~\eqref{eq:dual_problem2_i} being Lagrangian dual problems implies that they are concave with respect to all variables~\cite{boyd2004cvxbook}, they are still nonlinear and thus computationally intensive to solve given their large dimension. However, the cost function of~\eqref{eq:dual_problem2_i} (resp. \eqref{eq:dual_problem0_i}) becomes quadratic when $\mu_1$ and $\mu_2$ (resp. $\mu_0$) are fixed. This will allow us to propose an efficient algorithm in Section~\ref{sec:bisection_dual_probs} to solve these problems.

\subsection{Strong duality of the optimization problems}
\label{sec:opt_strong_dual}

Since optimization problems~\eqref{eq:khype_opt_coarse_constr} and~\eqref{eq:khype_opt_finescale_constr_3} are non-convex, it does not immediately follows that the Lagrangian duality gap is zero. This means that, unless shown otherwise, the optimal solutions to the dual problems in~\eqref{eq:dual_problem2_i} and~\eqref{eq:dual_problem0_i} can be different from those of~\eqref{eq:khype_opt_coarse_constr} and~\eqref{eq:khype_opt_finescale_constr_3}. Fortunately, building upon results from non-convex optimization in~\cite{tuy2013strongDualityQPQC} we can show that strong duality holds for this problem. This is formalized in the following result:
\begin{theorem} \label{thm:thm1_duality_ours}
Suppose that the variables $\mu_1^*$ and $\mu_2^*$ (resp. $\mu_0^*$) that solve problem~\eqref{eq:dual_problem2_i} (resp.~\eqref{eq:dual_problem0_i}) are strictly positive. Then, strong duality holds for problem~\eqref{eq:khype_opt_finescale_constr} (resp.~\eqref{eq:khype_opt_coarse_constr}).
\end{theorem}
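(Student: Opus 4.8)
\emph{Proof plan.} The key observation is that problems~\eqref{eq:khype_opt_coarse_constr} and~\eqref{eq:khype_opt_finescale_constr_3} fail to be convex \emph{only} because their quadratic constraints are equalities: the constraint functions themselves, being sums of squared Euclidean (or RKHS) norms, have positive semidefinite Hessians. The plan is therefore to show that, once the multipliers of these quadratic constraints are fixed to strictly positive values, the Lagrangian becomes a convex function of the primal variables, which places the problem within the scope of the strong-duality results of~\cite{tuy2013strongDualityQPQC} for quadratically constrained quadratic programs.

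First I would recast both problems as finite-dimensional quadratically constrained quadratic programs. By the reproducing property~\eqref{eq:reproducingProp} and the representer theorem, an optimal $\psi_n$ (resp.\ $\psi_{\calC_i}$) can be taken in the span of $\{\kappa(\cdot,\widetilde{\bm}_\ell)\}_{\ell=1}^L$, so that $\|\psi_n\|_{\mathcal{H}}^2 = \balpha_n^\top\bK\balpha_n$ with $\bK\succeq\cb{0}$ and $\psi_n(\bM)$ affine in the coefficient vector $\balpha_n$. In these coordinates the objective is a convex quadratic, the simplex constraints and the definitions of $\bxi_n$ and $\bxi_{\psi,n}$ are affine, and the only genuinely quadratic (equality) constraints are the reconstruction-error and inter-scale-variation constraints, which are \emph{convex} in the primal variables. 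Differentiating $\mathcal{J}_{\calD}$ twice with respect to the primal variables then yields a block-diagonal Hessian whose blocks are $\bK\succeq\cb{0}$ (from the $\balpha_n$), $\mu_1\bI$ (from the $\bxi_n$) and $\mu_2\bI$ (from the $\ba_n$ and from the $\bxi_{\psi,n}$), since every affine term --- including the bilinear coupling $-\bbeta_n^\top(\cdots)$ --- contributes nothing; for $\mathcal{J}_{\calC}$ one gets analogously the blocks $\bK$, a positive multiple of $\bI$ from $\|\ba_{\calC_i}\|^2$, and $\mu_0\bI$. Hence, under the hypothesis $\mu_1^*,\mu_2^*>0$ (resp.\ $\mu_0^*>0$), the Lagrangian at the optimal multipliers is convex in the primal variables.

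The remaining steps are then routine. Being convex in the primal variables at the optimal dual point, the Lagrangian attains its global minimum at the stationary point already written down in the optimality conditions preceding~\eqref{eq:dual_problem2_i} (resp.\ \eqref{eq:dual_problem0_i}), so the dual optimum equals the Lagrangian evaluated there. Since $\mu_1^*,\mu_2^*>0$ lie in the interior of the dual feasible set, first-order optimality of the dual in $(\mu_1,\mu_2)$ --- via the envelope (Danskin) formula --- forces the two quadratic equality constraints to hold at that primal point, while first-order optimality in $\bbeta_n,\bmu_{3,n},\lambda_n$ and complementarity for $\bgamma_n\geq\cb{0}$ force the affine and simplex constraints; the recovered primal point is thus feasible for~\eqref{eq:khype_opt_finescale_constr_3}. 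Feasibility makes every multiplier-weighted constraint term in the Lagrangian vanish, so the dual optimum coincides with the primal objective at a feasible point and is therefore no smaller than the primal optimum; combined with weak duality this gives $d^*=p^*$, and also exhibits a primal optimizer. The coarse-scale claim follows identically with $\mu_0$ in place of $(\mu_1,\mu_2)$. I expect the main obstacle to be the bookkeeping of this last step --- rigorously verifying that the point produced by the dual optimality conditions satisfies the \emph{nonlinear} equality constraints of the original problem --- which is precisely where the strict positivity of the multipliers is indispensable and where the QCQP strong-duality theorem of~\cite{tuy2013strongDualityQPQC} does the heavy lifting.
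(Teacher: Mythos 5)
Your proposal follows essentially the same route as the paper's proof: reduce to a finite-dimensional QCQP via the representer-theorem/orthogonal-decomposition argument, observe that strictly positive multipliers on the quadratic constraints make the Lagrangian (strictly) convex on the affine manifold of linear equality constraints, and let \cite[Theorem 6]{tuy2013strongDualityQPQC} deliver the zero duality gap. The only step you leave implicit, which the paper carries out explicitly, is splitting each quadratic equality into a convex and a concave inequality so that the free multipliers $\mu_1,\mu_2$ (resp. $\mu_0$) are identified with differences $u_1-u_3$, $u_2-u_4$ of the nonnegative multipliers to which Tuy's inequality-constrained theorem applies.
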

\begin{proof}
The proof builds upon the results in~\cite[Theorem 6]{tuy2013strongDualityQPQC}. Due to space limitations, it is relegated to the supplemental material, also available in~\cite{borsoi2019BMUAN_arxiv}. 
\end{proof}
This shows that the proposed Lagrangian relaxation strategy can achieve the same solution to the original problems, which was the case in all our experiments.

\subsection{An efficient solution to the Lagrangian dual problem}
\label{sec:bisection_dual_probs}

In order to devise an efficient algorithm for solving problems~\eqref{eq:dual_problem0_i} and~\eqref{eq:dual_problem2_i}, we first note that the purpose of the maximization with respect to $\mu_0$, $\mu_1$ and $\mu_2$ is to ensure that the quadratic equality constraints in the primal problems~\eqref{eq:khype_opt_coarse_constr} and~\eqref{eq:khype_opt_finescale_constr_3} are satisfied. Thus, we will attempt to write the dual problem in an equivalent form that will allow us to exploit this property to obtain a simpler solution. 

The optimality conditions for the coarse scale dual problem~\eqref{eq:dual_problem0_i} with respect to $\mu_0$ are obtained by differentiating the cost function and setting it equal to zero, which gives
\begin{align} \label{eq:optimality_mu0_dual_i}
\begin{split}
	g_0(\mu_0;\bomega) = 0 \,,
\end{split}
\end{align}
where function $g_0$ is defined as
\begin{align}
	g_0(\mu_0;\bomega) {}={}  & 
	\frac{1}{\mu_0^2}
	\sum_{i=1}^K \|\bbeta_{\calC_i}\|_2^2
	- K\,C_0
	\,.
\end{align}

Similarly, for the fine scale dual problem~\eqref{eq:dual_problem2_i} the optimality conditions with respect to $\mu_1$ and $\mu_2$ will be given by
\begin{align} \label{eq:optimality_mu_dual_i}
\Bigg\{
\begin{split}
	g_1(\mu_1,\mu_2;\bomega) = 0 \,,
    \\
    g_2(\mu_1,\mu_2;\bomega) = 0 \,,
\end{split}
\end{align}
where functions $g_1$ and $g_2$ are defined as
\begin{align}
	g_1(\mu_1,\mu_2;\bomega) {}={}  &  \frac{1}{\mu_1^2} \sum_{n=1}^N\|\bbeta_n\|_2^2 - N\, C_1 \,,
    \nonumber \\ 
    g_2(\mu_1,\mu_2;\bomega) {}={}  & \frac{1}{\mu_2^2} \sum_{n=1}^N \Big( 
    \|\bM^\top \bbeta_n + \bgamma_n - \lambda_n \cb{1}\|_2^2
    + \|\bmu_{3,n}\|_2^2
    \Big)
    \nonumber \\ &
	- N \big( C_Y - C_E\big) \,.
\end{align}

Let us define the following functions
\begin{align} \label{eq:optimal0_w_cond_mu}
	\widetilde{\bomega}_{\calC}(\mu_0) = \mathop{\arg\max}_{\bomega_{\calC}\,:\,\bgamma_{\calC_i}\geq\cb{0}}
    & \,\, \sum_{i=1}^K \Big( \bomega_{\calC_i}^\top \bB_{\calC}(\mu_0) \bomega_{\calC_i} + \bc_{\calC_i} \bomega_{\calC_i}  \Big)
\end{align}
for the coarse scale problem, and
\begin{align} \label{eq:optimal_w_cond_mu}
	\widetilde{\bomega}(\mu_1,\mu_2) = \mathop{\arg\max}_{\bomega\,:\,\bgamma_n\geq\cb{0}}
    & \sum_{n=1}^N \Big( \bomega_n^\top \bB(\mu_1,\mu_2) \bomega_n + \bc_n \bomega_n  \Big)
\end{align}
for the fine scale problem.
By substituting~\eqref{eq:optimal0_w_cond_mu} in $g_0$ and~\eqref{eq:optimal_w_cond_mu} in $g_1$ and $g_2$, the optimal $\mu_0$, $\mu_1$ and $\mu_2$ can be found by solving two systems of equations, one for the coarse scale problem, given by
\begin{align} \label{eq:nonlin_eq0_sys_bisec1}
	\text{find } \mu_0 \,\,\, \text{such that} \,\,\, 
    g_0(\mu_0;\widetilde{\bomega}_{\calC}) = 0 \,,
\end{align}
and another for the fine scale problem, given by
\begin{align} \label{eq:nonlin_eq_sys_bisec1}
	\text{find } \mu_1,\mu_2 \,\,\, \text{such that} \,\,\, 
    \left\{ \begin{array}{cc}
    g_1(\mu_1,\mu_2;\widetilde{\bomega}) = 0 \,, \\
    g_2(\mu_1,\mu_2;\widetilde{\bomega}) = 0 \,,
    \end{array}\right.
\end{align}
where we omitted the dependency of functions $\widetilde{\bomega}_{\calC}$ and $\widetilde{\bomega}$ on $\mu_0$, $\mu_1$ and $\mu_2$ for notational simplicity.
This consists in maximizing the inner optimization problems in~\eqref{eq:dual_problem0_i} and~\eqref{eq:dual_problem2_i} (with respect to $\bomega_{\calC}$ or $\bomega$) such that the quadratic constraints of the primal problems are satisfied.

Although many techniques can be used to solve \eqref{eq:nonlin_eq0_sys_bisec1} (resp. \eqref{eq:nonlin_eq_sys_bisec1}), one must note that evaluating functions $\widetilde{\bomega}_{\calC}(\mu_0)$ (resp. $\widetilde{\bomega}(\mu_1,\mu_2)$) is computationally expensive. Thus, to have an efficient solution we resort to a bisection strategy, which is a robust algorithm that converges to approximate solutions to these problems with relatively few function evaluations.

Although the solution of \eqref{eq:nonlin_eq0_sys_bisec1} using the conventional bisection algorithm is straightforward, the multidimensional case is less clear. Thus, we present it in the remaining of this section.
The multidimensional bisection algorithm relies on the Poincar\'e-Miranda theorem, which states that if a set of multivariate functions change sign in an interval for any of its coordinates, then there is at least one root, common to all such functions, within that interval~\cite{galvan2017multivariateBisection,bachrathy2012multidim_bisection}. 
This condition can be used to verify whether a given region in the function's domain contains a zero or not. 

Thus, by defining a search space and dividing it in two parts along one of the coordinates, we can test to see in which half the root is contained. By performing this operation alternately along each of the function coordinates, we can get arbitrarily close to the root~\cite{galvan2017multivariateBisection,bachrathy2012multidim_bisection}. This procedure is detailed in Algorithm~\ref{alg:2d_bisection}.
In all our experiments, we ran Algorithm~\ref{alg:2d_bisection} for ten iterations or until the relative variation of the parameters became smaller than a tolerance factor~$\epsilon=0.1$.
The final Blind Multiscale Unmixing Algorithm for Nonlinear spectral unmixing (BMUA-N) is presented in Algorithm~\ref{alg:proposed_alg_1}.

\begin{algorithm} [bth]
\small
\SetKwInOut{Input}{Input}
\SetKwInOut{Output}{Output}
\caption{Bi-dimensional bisection algorithm~\label{alg:2d_bisection}}
\Input{Functions $g_1,g_2:\mathbb{R}^2\to\mathbb{R}$.}
\Output{The estimated root $(a_r,b_r)$.}
Define an initial rectangle containing the root $R=\{(a_1,b_1),(a_2,b_1),(a_1,b_2),(a_2,b_2)\}$, $a_1<a_2$, $b_1<b_2$ \;
\While{Stopping criteria is not satisfied}{
Compute centers: $(a_c,b_c)=\big((a_2-a_1)/2,(b_2-b_1)/2\big)$ \;

Divide the search space in two and check for the root:
$R'=\{(a_1,b_1),(a_c,b_1),(a_1,b_2),(a_c,b_2)\}$ \;
Evaluate $g_1$ and $g_2$ at the four vertices of $R'$\;
\uIf{the sign of both $g_1$ and $g_2$ is not constant at all vertices of $R'$}
{$a_2=a_c$ \; }
\Else{$a_1=a_c$}

Partition rectangle across the other dimension
$R'=\{(a_1,b_1),(a_2,b_1),(a_1,b_c),(a_2,b_c)\}$ \;
Evaluate $g_1$ and $g_2$ at the four vertices of $R'$\;
\uIf{the sign of both $g_1$ and $g_2$ is not constant at all vertices of $R'$}
{$b_2=b_c$ \; }
\Else{$b_1=b_c$}
}
$(a_r,b_r)=\big((a_2-a_1)/2,(a_2-a_1)/2\big)$ \;
\KwRet $(a_r,b_r)$\;
\end{algorithm}

\begin{algorithm} [bth]
\small
\SetKwInOut{Input}{Input}
\SetKwInOut{Output}{Output}
\caption{BMUA-N~\label{alg:proposed_alg_1}}
\Input{$\bY$, $\bM$, ${\sigma_{\be,{\psi}}^2}$, the number of superpixels $K$ and multiscale decomposition matrix $\bW$.}
\Output{The estimated abundance matrix $\widehat{\!\bA}$.}
Estimate the noise covariance matrix $\bSigma_{\be}$ from $\bY$ \;
Compute the constants $C_0$, $C_1$, $C_Y$ and $C_E$ using equations~\eqref{eq:constant_C0},~\eqref{eq:constant_C1}, and~\eqref{eq:constants_CY_CE} \;
Compute $\bY_{\!\mathcal{C}}=\bY\bW$\; 
Find $\widehat{\!\bA}_{\mathcal{C}}$ by solving \eqref{eq:khype_opt_coarse_constr} using the procedures detailed in Sections~\eqref{sec:coarse_lagrangian} and~\eqref{sec:bisection_dual_probs}\;
Compute $\widehat{\!\bA}_{\mathcal{D}}$ using~\eqref{eq:mscale_conjugate_transf}\;
Find $\widehat{\!\bA}$ by solving \eqref{eq:khype_opt_finescale_constr_3} using the procedures detailed in sections~\eqref{sec:fine_lagrangian} and~\eqref{sec:bisection_dual_probs}\;
\KwRet $\widehat{\!\bA}$\;
\end{algorithm}

\section{Determining the number of superpixels}
\label{sec:superpixel_size_fnder}


A parameter of fundamental importance in the design of the proposed multiscale transform $\bW$ is the number of superpixels $K$, or, equivalently, the average size of each superpixel $S=N/K$.
The purpose of the multiscale transform is to group semantically/spectrally similar pixels, which are then averaged (within each superpixel) to constitute the coarse scale image, capturing spatial correlation and reducing the influence of noise.
From this definition, the desired average superpixel size could be intuitively defined as the largest value of $S$ such that the superpixels are still spectrally homogeneous.

In order to evaluate the homogeneity of the superpixels, we consider the distribution of the singular values of the sets of pixels within each superpixel, which are ordered in the form of matrices (matricized). Thus, the $j$-th matricized superpixel $\bY_{\!j}$ can be written as
\begin{align}
	\bY_{\!j} {}={} \big[\,\by_{I_1},\ldots,\by_{I_{|\mathcal{N}_j|}} \big] \,, 
    \quad
    \{I_1,\ldots,I_{|\mathcal{N}_j|} \} \subseteq \mathcal{N}_j \,,
\end{align}
for $j=1,\ldots,K$. Denote the singular values of $\bY_{\!j}$ by $\rho_{j,1},\rho_{j,2},\ldots,\rho_{j,|\mathcal{N}_j|}$, ordered from the largest to the smallest magnitude.
%
%
The homogeneity of the $j$-th superpixel can then be assessed using the ratio between the two largest singular values of $\bY_{\!j}$, which intuitively evaluates how close $\bY_{\!j}$ is to being a rank-1 matrix. This measure has already been successfully employed to detect heterogeneous superpixels in HS segmentation~\cite{yi2018superpixelHomogeneityTesting}. We then define the average homogeneity of all superpixels $\Homogeneity(K)$ as a function of the number $K$ of superpixels as
\begin{align}
	\Homogeneity(K) = \frac{1}{K} \sum_{j=1}^K \frac{|\rho_{j,1}|}{|\rho_{j,2}|} \,,
\end{align}
where we assume that $\rho_{j,2}$ exists and is nonzero.
%
Thus, $K$ can be selected using the following simple criterion:
\begin{align}
\begin{split}
    K {}={} & \max \,\,\, j
    \\ &
    \text{subject to } \, \Homogeneity(j) \geq (1-\varepsilon) \max_v \big\{\Homogeneity(v)\big\} \,,
    \\ &
    \hspace{10.5ex} K_{\min} \leq j \leq K_{\max} \,,
\end{split}
\end{align}
where we restrict the number of superpixels to be within a prescribed interval $[K_{\min}, K_{\max}]$.

\section{Computational complexity analysis} \label{sec:complexity}

The computational complexity of the proposed algorithm depends mainly on the two bisection procedures in Algorithm~\ref{alg:proposed_alg_1}. Each iteration of the bisection method in the coarse domain problem involves solving $K$ quadratic problems (QPs) in $L+P+1$ variables (one for each superpixel), whereas each iteration of the bisection method in the original image scale involves solving $N$ QPs in $L+2P+1$ variables (one for each pixel).
Since the bisection method reduces the search domain by half at each iteration, it converges linearly~\cite{bachrathy2012multidim_bisection}. Nevertheless, this is sufficient to achieve a reasonable approximation of the optimal parameters $\mu_0^*$, $\mu_1^*$ and $\mu_2^*$ in relatively few iterations~($\leq10$ in our experiments).
%
To see how this compares to other spatially regularized methods, consider for instance the TV-based nonlinear SU algorithm in~\cite{chen2014nonlinear2}. This algorithm employs a variable splitting procedure that leads to an iterative algorithm. At each iteration, $N$ QPs in $L+P+1$ variables (one for each pixel) and $P$ linear systems in $N$ variables (one for each endmember) must be solved, which is comparable to our method.
This illustrates how the proposed separation of the SU problem in two spatial scales maintains a computational complexity that is competitive with other algorithms, even when the estimation of the parameters is considered.

\section{Results} \label{sec:results}

In this section, we evaluate the performance of the proposed method using both synthetic and real datasets. The BMUA-N is compared with the fully constrained least squares (FCLS), with the unregularized K-Hype~\cite{chen2013nonlinear}, with the TV-based K-Hype (K-Hype-TV)~\cite{chen2014nonlinear2}, with the CDA-NL~\cite{halimi2016unmixingVariabilityNonlinearityMismodeling} and with the NDU~\cite{ammanouil2016nonlinear} algorithms.
The performances were evaluated using the Root Mean Squared Error (RMSE) between the estimated abundance maps ($\text{RMSE}_{\bA}$) and between the reconstructed images ($\text{RMSE}_{\bY}$). The RMSE between a true, generic matrix~$\bX$ and its estimate~$\widehat{\!\bX}$ is defined as
\begin{equation}
    \text{RMSE}_{\bX} = \sqrt{\textstyle{\frac{1}{N_{\!\bX}}}\|\bX- \widehat{\!\bX}\|^2_F} \,,
\end{equation}
where $N_{\!\bX}$ denotes the number of elements in the matrix~$\bX$.

For the proposed method, the noise covariance matrix $\bSigma_{\be}$ was estimated using the residual method described in~\cite{roger1996residualMethodNoiseCovarEstimation,mahmood2017modifiedNoiseCovarianceEstimation}, and the superpixel sizes were selected using the strategy detailed in Section~\ref{sec:superpixel_size_fnder}, with $K_{\min}=N/8$, $K_{\max}=N/170$ and $\varepsilon=0.1$. The polynomial kernel described in~\eqref{eq:polyKernel} was used with $d=2$ for all kernel-based non-linear SU algorithms.

\begin{table} [!ht]
\footnotesize
\centering
\caption{Quantitative results for data cubes DC1 and DC2.}
\vspace{-2ex}
\renewcommand{\arraystretch}{1.2}
\begin{tabular}{c||c|c|c|c|c}
\hline\hline
\multicolumn{6}{c}{DC1 data cube}  \\
\hline
& & \multicolumn{2}{c|}{BLMM} & \multicolumn{2}{|c}{PNMM}  \\
\hline
SNR	&	Method	&	$\text{RMSE}_{\!\bA}$	&	$\text{RMSE}_{\bY}$			&	$\text{RMSE}_{\!\bA}$	&	$\text{RMSE}_{\bY}$		\\
\hline				
\multirow{6}{*}{20dB} 
&	FCLS	&	0.2587	&	0.1143	&	0.1657	&	0.1038	\\
&	K-Hype	&	0.0575	&	0.0816	&	0.0972	&	0.0769	\\
&	K-Hype-TV	&	0.0371	&	0.0814	&	0.0800	&	\textbf{0.0766}	\\
&	CDA-NL	&	0.0730	&	0.0821	&	0.1708	&	0.0789	\\
&	NDU	&	0.0483	&	\textbf{0.0736}	&	0.1263	&	0.0802	\\
&	BMUA-N	&	\textbf{0.0326}	&	0.0820	&	\textbf{0.0730}	&	0.0774	\\

\hline													
\multirow{6}{*}{30dB} 
&	FCLS	&	0.2591	&	0.0836	&	0.1633	&	0.0731	\\
&	K-Hype	&	0.0346	&	0.0258	&	0.0765	&	\textbf{0.0242}	\\
&	K-Hype-TV	&	\textbf{0.0323}	&	0.0258	&	0.0757	&	0.0243	\\
&	CDA-NL	&	0.0485	&	0.0265	&	0.1625	&	0.0290	\\
&	NDU	&	0.0336	&	\textbf{0.0256}	&	0.1208	&	0.0323	\\
&	BMUA-N	&	0.0325	&	0.0257	&	\textbf{0.0734}	&	0.0243	\\

\hline\hline
\multicolumn{6}{c}{DC2 data cube} \\
\hline
& & \multicolumn{2}{c|}{BLMM} & \multicolumn{2}{|c}{PNMM} \\
\hline
SNR	&	Method	&	$\text{RMSE}_{\!\bA}$	&	$\text{RMSE}_{\bY}$			&	$\text{RMSE}_{\!\bA}$	&	$\text{RMSE}_{\bY}$	\\
\hline			
\multirow{6}{*}{20dB} 
&	FCLS	&	0.1718	&	0.1055	&	0.1554	&	0.1033	\\
&	K-Hype	&	0.0723	&	0.0774	&	0.1165	&	\textbf{0.0758}	\\
&	K-Hype-TV	&	0.0557	&	0.0771	&	0.1037	&	\textbf{0.0758}	\\
&	CDA-NL	&	0.0648	&	0.0780	&	0.1871	&	0.0781	\\
&	NDU	&	0.0518	&	\textbf{0.0710}	&	0.1363	&	0.0794	\\
&	BMUA-N	&	\textbf{0.0490}	&	0.0771	&	\textbf{0.1009}	&	0.0759	\\

\hline													
\multirow{6}{*}{30dB} 
&	FCLS	&	0.1714	&	0.0750	&	0.1553	&	0.0731	\\
&	K-Hype	&	0.0503	&	0.0244	&	0.0994	&	\textbf{0.0240}	\\
&	K-Hype-TV	&	0.0501	&	0.0244	&	0.0992	&	\textbf{0.0240}	\\
&	CDA-NL	&	0.0491	&	0.0250	&	0.1803	&	0.0289	\\
&	NDU	&	0.0422	&	\textbf{0.0243}	&	0.1318	&	0.0321	\\
&	BMUA-N	&	\textbf{0.0393}	&	0.0245	&	\textbf{0.0902}	&	0.0241	\\

\hline\hline
\end{tabular}
\label{tab:quantitative_results}
\end{table}


\begin{figure}[h]
\centering
\includegraphics[width=0.95\linewidth]{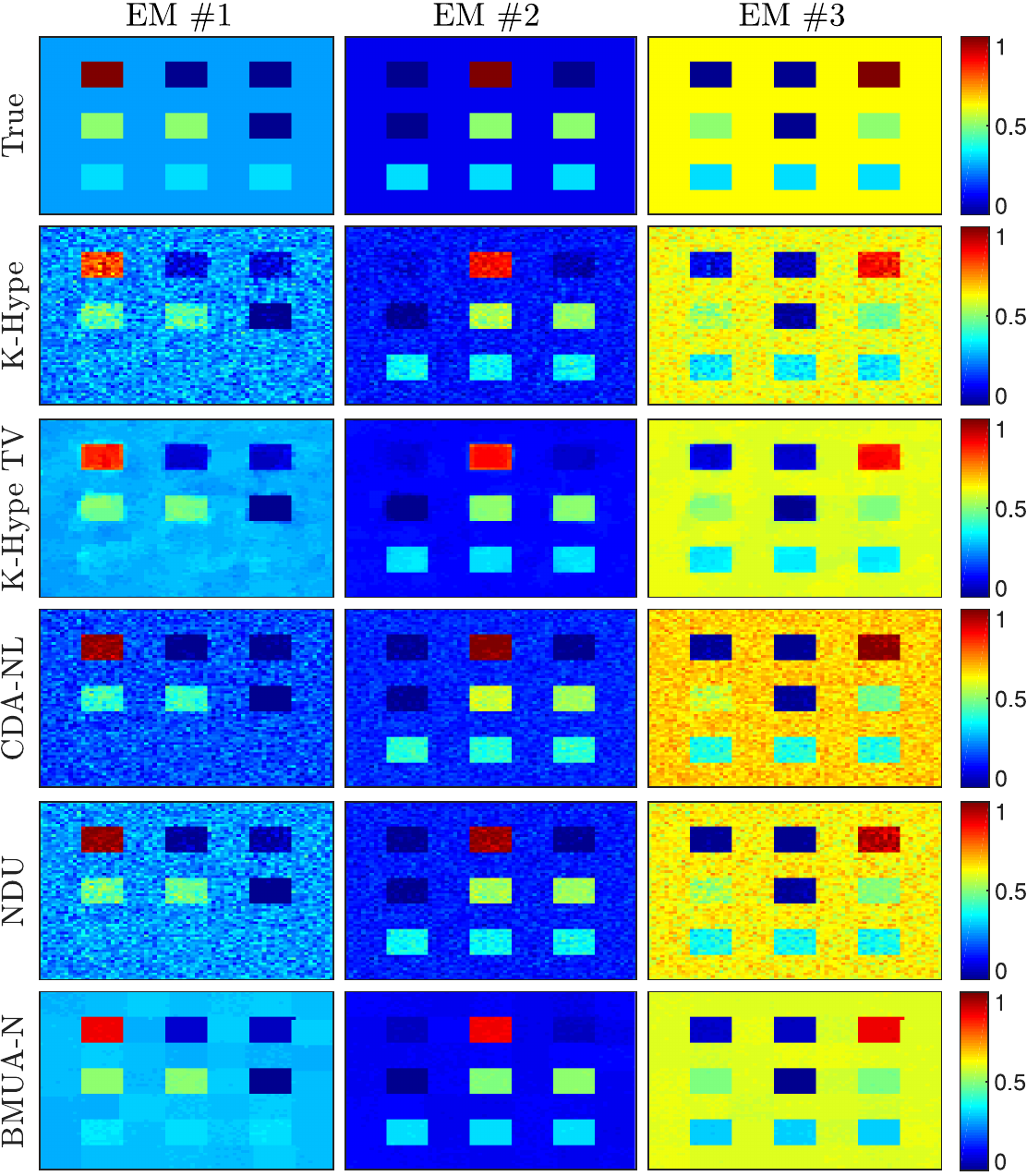}
\caption{Abundance maps estimated by all algorithms for the data \mbox{cube DC1.}}
\label{fig:abundances_synth_ex1}
\end{figure}
\begin{figure}[h]
\centering
\includegraphics[width=0.95\linewidth]{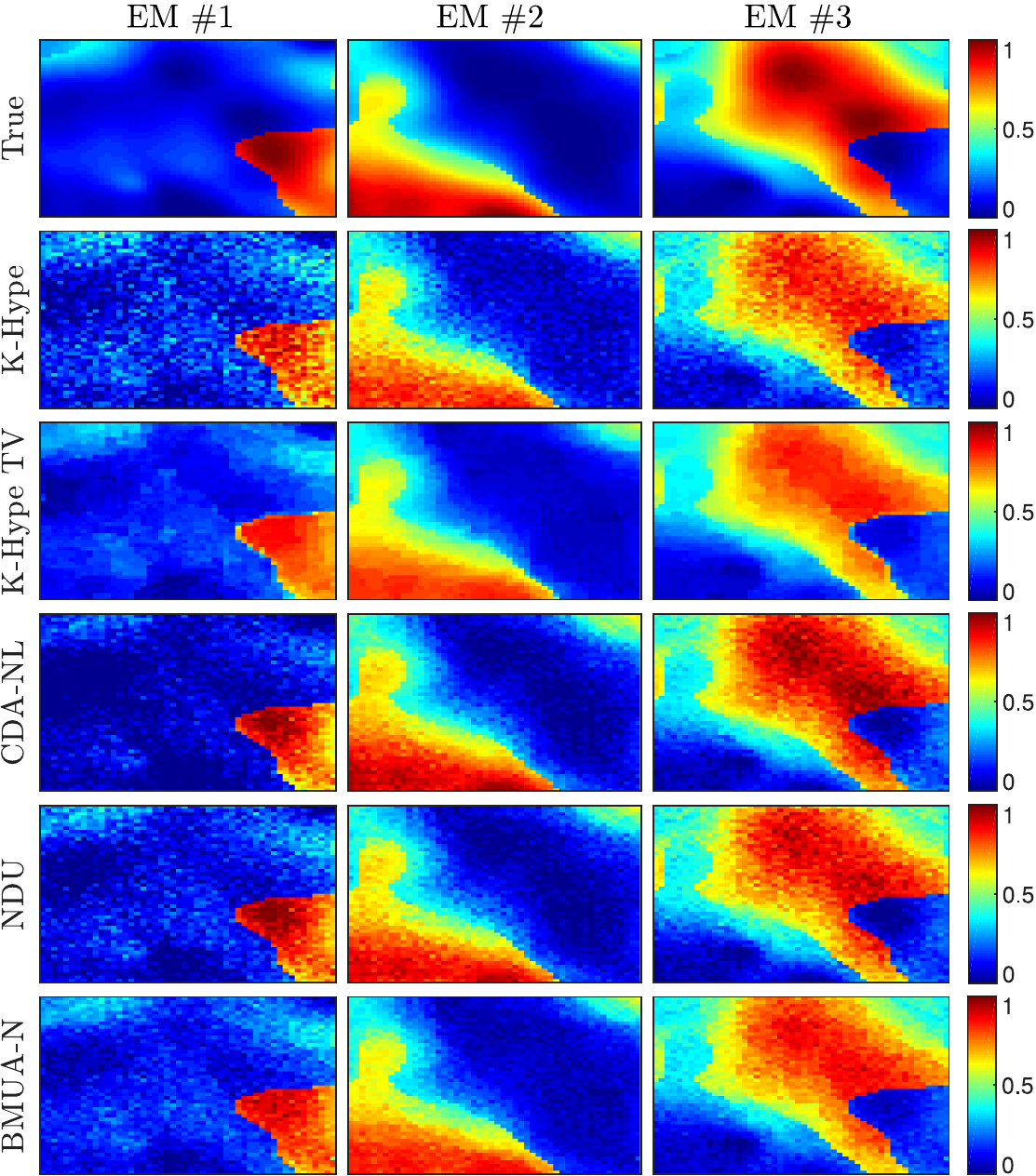}
\caption{Abundance maps estimated by all algorithms for the data \mbox{cube DC2.}}
\label{fig:abundances_synth_ex2}
\end{figure}

\subsection{Synthetic data sets}

To compare the performance of the different algorithms quantitatively, we created two synthetic datasets with spatially correlated abundance maps, namely Data Cube 0 (DC0), with $70\times70$ pixels, and Data Cube 1 (DC1), with $50\times50$ pixels. Both datasets were constructed using three spectral signatures with 224 bands extracted from the USGS Spectral Library.  The synthetic abundance maps are displayed in the first row of Figs.~\ref{fig:abundances_synth_ex1} and~\ref{fig:abundances_synth_ex2}.
The reflectance values were generated using two nonlinear mixture models, namely, the bilinear mixing model (BLMM), defined as
\begin{align}
	\by_n {}={} \bM\ba_n + \sum_{i=1}^{P-1} \sum_{j=i+1}^P  a_{n,i} a_{n,j} \bm_i \circ \bm_j + \be_n \,,
\end{align}
with $\circ$ being the Hadamard product, and the post-nonlinear mixing model (PNMM), defined as
\begin{align}
	\by_n {}={} (\bM\ba_n)^{0.7} + \be_n \,,
\end{align}
where a fixed exponential value of~$0.7$ has been applied to the LMM result.
Finally, white Gaussian noise with signal to noise ratios (SNR) of both 20 and 30dB was added to all datasets.

The parameters for each algorithm were either fixed or selected based on a grid search performed for each dataset, with search ranges defined based on those ranges discussed by the authors in the original publications. For the BMUA-N, we fixed the modeling errors as a small value relative to the average pixel energy, given by $\sigma_{\be,\psi}^2=10^{-8}\frac{1}{N}\sum_{n=1}^N\|\by_n\|_2^2$. For the K-Hype algorithm, we selected the parameter among the values $\mu\in\{0.001,0.002,0.005,0.01,0.02,0.1,1\}$. For the K-Hype-TV, the parameters were selected among the following values: $\mu\in\{0.001,0.002,0.005,0.01,0.02,0.1,1\}$, $\eta\in\{0,0.01,0.1,0.25,0.5,0.75,1\}$. For the NDU algorithm, the parameters were selected among the values $\lambda,\,\mu\in\{0.0005,0.005,0.05,0.5,5,50,500,5000\}$.

\subsubsection{Discussion}

The quantitative results of all algorithms are shown in Table~\ref{tab:quantitative_results}.
The proposed BMUA-N method outperformed the competing algorithms for almost all cases, except for one where its result was very close to that of the TV-based solution. This is despite the fact that the parameters of K-Hype, K-Hype-TV and NDU were selected through a grid search procedure.
The abundance maps provided by the nonlinear SU methods for both datacubes are displayed in Figures~\ref{fig:abundances_synth_ex1} and~\ref{fig:abundances_synth_ex2} for illustrative purposes for the case of the BLMM with an SNR of 20dB. The FCLS results were not displayed for the sake of space since they were significantly worse than those of the other algorithms. It can be seen that the BMUA-N results better approximates the ground truth, and even though the K-Hype-TV solution is smoother for DC2 its mean results are farther from the true values. Moreover, although the NDU achieves relatively good abundance reconstructions for the BLMM, the results are noisy and not as good for the PNMM.
The reconstruction errors of the nonlinear SU algorithms, also shown in Table~\ref{tab:quantitative_results}, were similar and significantly lower than those of the FCLS.
The execution times of the BMUA-N algorithm, shown in Table~\ref{tab:alg_exec_time}, are about 3.5 times higher than those of K-Hype-TV and half those of the NDU. Thus, the complexity of the BMUA-N is still on the same order of the other state of the art algorithms even though no significant parameter tuning is necessary.

\begin{table}[!t]
\footnotesize
\caption{Average execution time (in seconds) of the algorithms.}
\vspace{-0.2cm}
\centering
\renewcommand{\arraystretch}{1.2}
\begin{tabular}{c||c|c|c|c|c}
\hline\hline
&	DC1	&	DC2	& Cuprite & Urban & Jasper Ridge \\
\hline\hline
FCLS	&	0.59	&	0.30	&	11.66	&	0.34	&	1.37	\\
K-Hype	&	4.54	&	2.40	&	43.29	&	1.42	&	75.17	\\
K-Hype-TV	&	43.27	&	22.51	&	405.81	&	17.72	&	74.08	\\
CDA-NL	&	29.26	&	15.57	&	3295.99	&	27.90	&	137.63	\\
NDU	&	263.30	&	141.04	&	2260.45	&	74.32	&	483.34	\\
BMUA-N	&	144.60	&	79.03	&	1380.21	&	45.21	&	240.12	\\
\hline
\end{tabular}
\label{tab:alg_exec_time}
\end{table}

\begin{figure}
    \centering
    \includegraphics[height=0.29\linewidth]{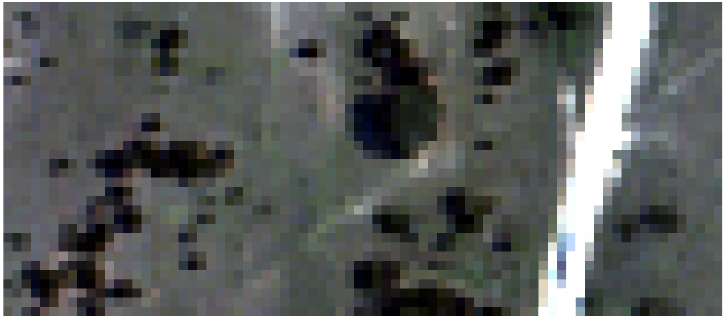}
    \includegraphics[height=0.29\linewidth]{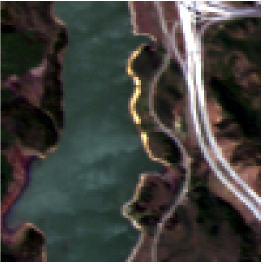}
    \caption{Subscene of the Urban (left) and Jasper Ridge (right) HIs.}
    \label{fig:exRealIm_HIs}
\end{figure}
\begin{figure}
\centering
\includegraphics[width=1\linewidth]{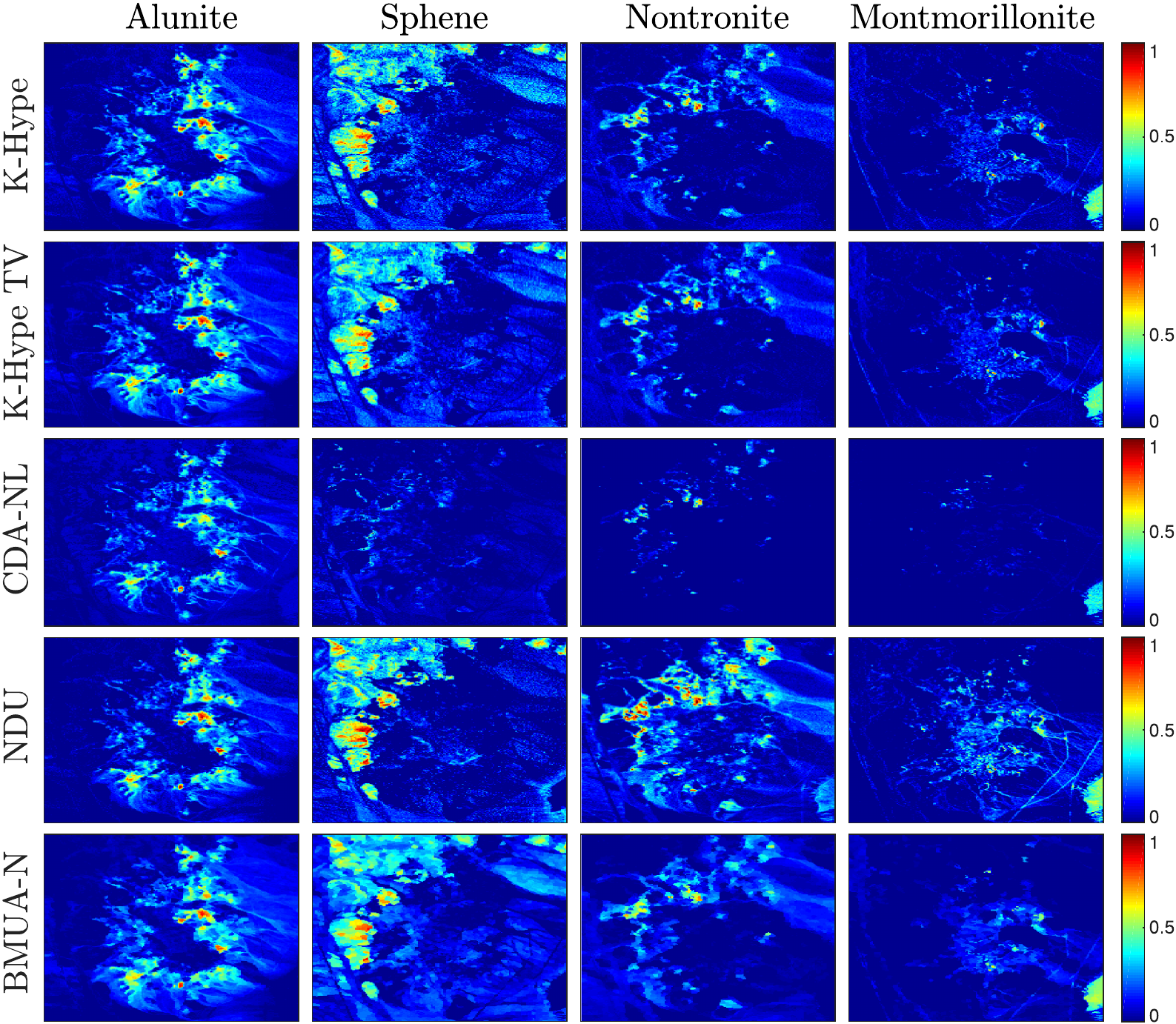}
\caption{Estimated abundance maps for four endmembers of the Cuprite image.}
\label{fig:exxReal}
\end{figure}
\begin{figure}
    \centering
    \includegraphics[width=1\linewidth]{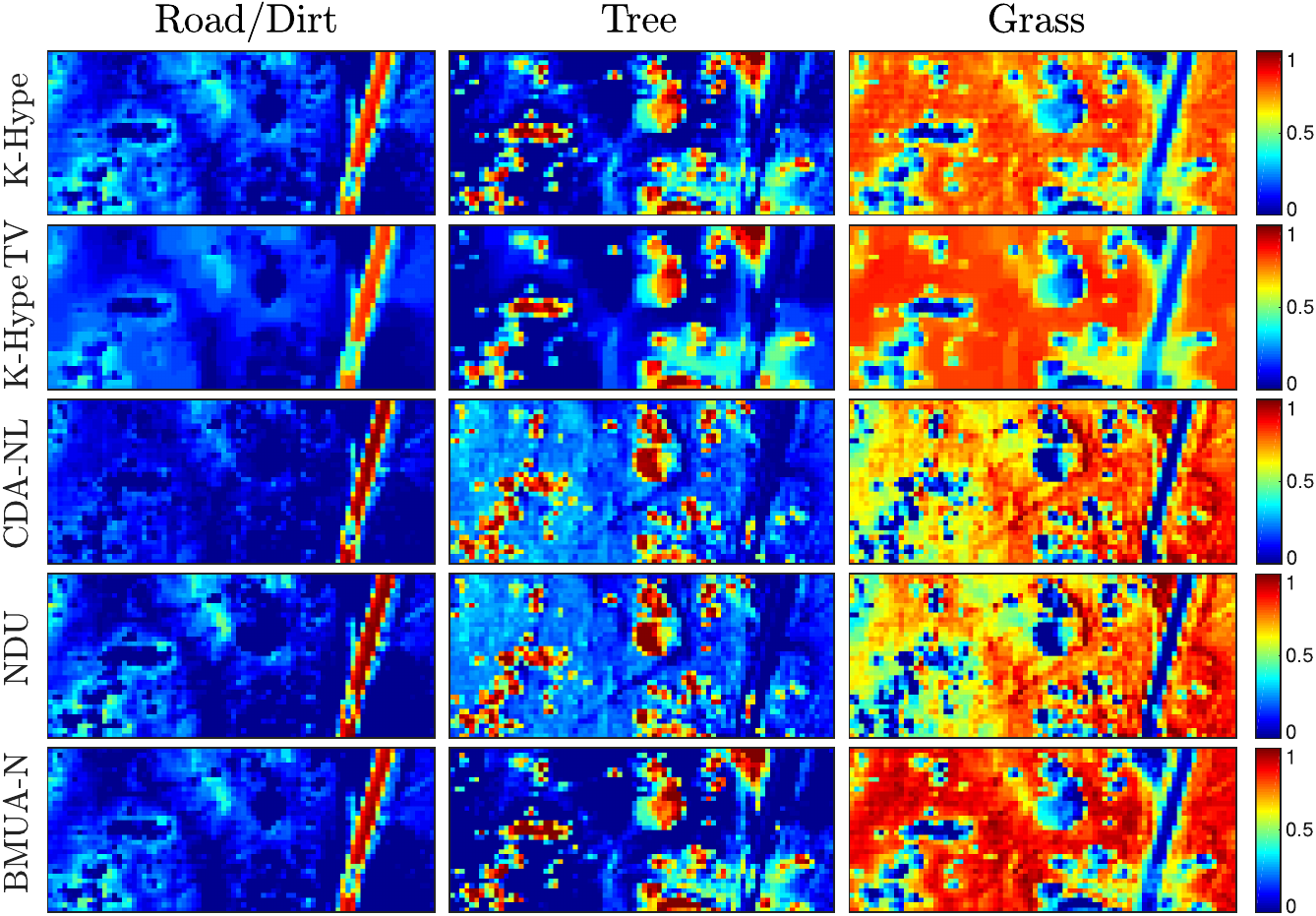}
    \caption{Estimated abundance maps for the Urban HI subscene.}
    \label{fig:exRealIm_urbanHI_abundances}
\end{figure}
\begin{figure}
    \centering
    \includegraphics[width=1\linewidth]{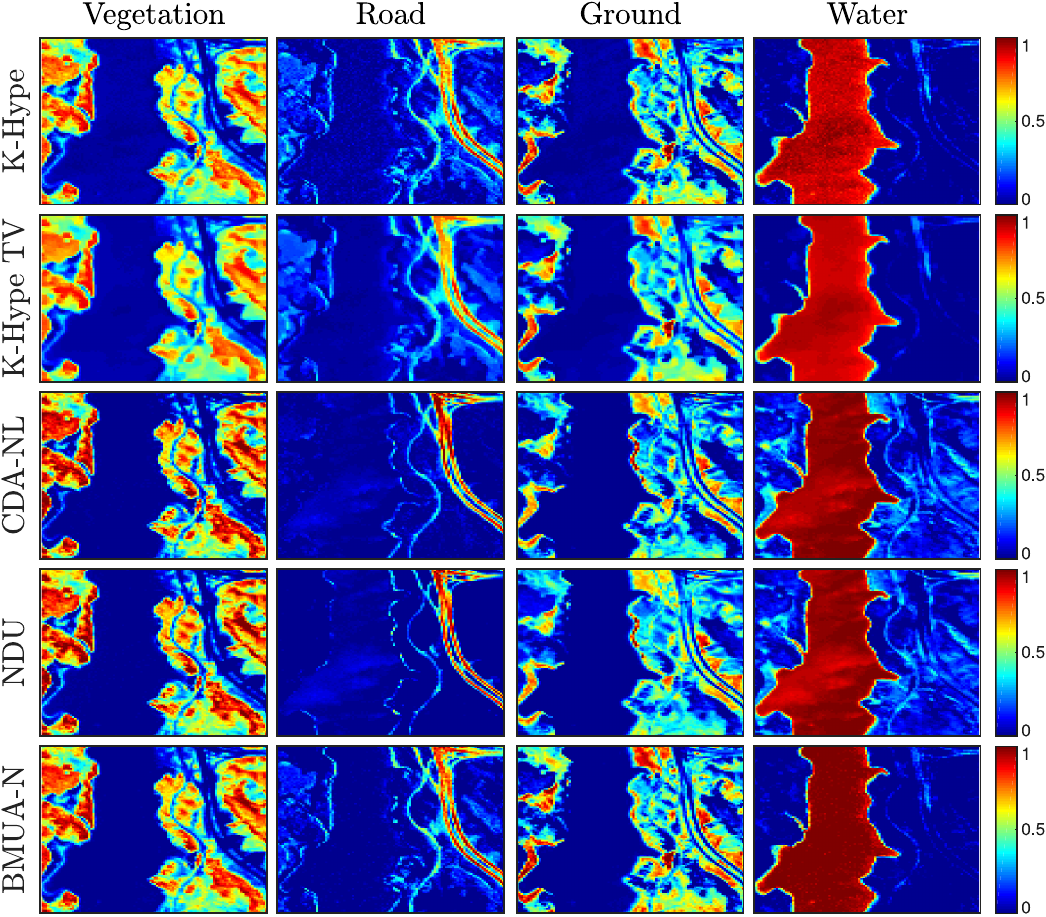}
    \caption{Estimated abundance maps for the Jasper Ridge HI.}
    \label{fig:exRealIm_jasperHI_abundances}
\end{figure}



%
\begin{table}[!t]
\footnotesize
\caption{Reconstruction errors ($\text{RMSE}_{\bY}$) for the real datasets.}
\vspace{-0.2cm}
\centering
\renewcommand{\arraystretch}{1.2}
\resizebox{\linewidth}{!}{%
\begin{tabular}{@{\hspace{0.4\tabcolsep}} c@{\hspace{0.5\tabcolsep}}
||@{\hspace{0.5\tabcolsep}}c @{\hspace{0.5\tabcolsep}}
|@{\hspace{0.5\tabcolsep}}c@{\hspace{0.5\tabcolsep}}
|@{\hspace{0.5\tabcolsep}}c@{\hspace{0.5\tabcolsep}}
|@{\hspace{0.5\tabcolsep}}c@{\hspace{0.5\tabcolsep}}
|@{\hspace{0.5\tabcolsep}}c@{\hspace{0.5\tabcolsep}}
|@{\hspace{0.5\tabcolsep}}c@{\hspace{0.5\tabcolsep}}}
\hline\hline
 & FCLS & K-Hype & K-Hype-TV & CDA-NL & NDU & BMUA-N \\
\hline\hline
Cuprite	&	0.0107	&	0.0082	&	0.0082	&	0.0095	&	\textbf{0.0080}	&	0.0090	\\
Urban	&	0.0228	&	0.0434	&	\textbf{0.0060}	&	0.0063	&	0.0159	&	0.0062	\\
Jasper Ridge	&	0.0225	&	0.0562	&	\textbf{0.0164}	&	0.0168	&	0.0167	&	0.0170	\\
\hline
\end{tabular}}
\label{tab:reconstr_err_real_img}
\end{table}

\subsection{Experiments with real data}

For the simulations with real data we consider the Cuprite, the Urban and the Jasper Ridge datasets, which were captured by the AVIRIS instrument and originally had 224 bands. Water absorption and low SNR bands were removed before processing, resulting in 188 bands for the Cuprite image, 162 bands for the Urban image and 198 bands for the Jasper Ridge image. Previous works indicate that 14 endmembers are present at the Cuprite mining field~\cite{Nascimento2005,imbiriba2018ULTRA_V,Borsoi_multiscaleVar_2018}, while the Jasper Ridge HI is known to have four predominant endmemers~\cite{imbiriba2018ULTRA_V}. For the Urban HI, we consider a smaller subscene (shown in Figure~\ref{fig:exRealIm_HIs}) containing three endmembers to allow for an easier evaluation.

The reconstructed abundance maps of the nonlinear SU algorithms for the three datasets are presented in Figures~\ref{fig:exxReal},~\ref{fig:exRealIm_urbanHI_abundances} and~\ref{fig:exRealIm_jasperHI_abundances}. For the Cuprite dataset, four endmembers were selected whose distribution could be clearly distinguished in the scene~\cite{Nascimento2005}.
The reconstructed abundance maps of the nonlinear SU algorithms are presented in Figure~\ref{fig:exxReal}, where it can be seen that, except for the case of the CDA-NL, the results for all algorithms are generally compatible and agree with previous studies of this scene~\cite{Nascimento2005,imbiriba2018ULTRA_V,Borsoi_multiscaleVar_2018}. Nevertheless, a careful analysis reveals that the BMUA-N results, displayed in the bottom row of the figure, show smoother abundance reconstructions without compromising image details and discontinuities. 
For the Urban HI, the last row of Figure~\ref{fig:exRealIm_urbanHI_abundances} shows that the abundances estimated by the BMUA-N generally contain stronger components for the road, tree and grass endmembers at the regions where these materials appear more prominently, leading to a better separation between the different endmember classes according to Figure~\ref{fig:exRealIm_HIs}.
The abundance maps estimated by all algorithms for the Jasper Ridge HI, shown in Figure~\ref{fig:exRealIm_jasperHI_abundances}, generally agree with the distribution of the corresponding materials observed in Figure~\ref{fig:exRealIm_HIs}. A careful analysis of the BMUA-N results shows that it has stronger vegetation and water components when compared to K-Hype and K-Hype-TV. Moreover, although the abundances estimated by CDA-NL and NDU contain slightly clearer results for the road endmember, they also show considerably more confusion between ground and water.

The reconstruction errors for all datasets are shown in Table~\ref{tab:reconstr_err_real_img}. Except for K-Hype, the nonlinear SU algorithms generally achieved much smaller reconstruction errors when compared to the FCLS. Although the results of BMUA-N were slightly higher than the TV-based solution, as can also be observed in the synthetic experiments, small variations in the reconstruction error do not necessarily correlate with better abundance reconstructions.
The execution times of the proposed method were again around 3.5 times higher than that of the TV-based solution but only about half of those of the NDU, which indicates that it scales favorably with \mbox{larger image sizes.}


\section{Conclusions} \label{sec:conclusions}

In this paper, a blind multiscale unmixing strategy was proposed for nonlinear kernel-based mixing models. Based on the concept of a multiscale regularization strategy recently introduced in~\cite{Borsoi_multiscale_lgrs_2018}, we were able to efficiently capture image spatial information by splitting the nonlinear mixing process between two image scales, one containing the coarse, low-dimensional image structures and another representing the original image domain.
Furthermore, we employed a theory-based statistical framework to devise a consistent strategy to automatically select the regularization parameters of the proposed algorithm and of the multiscale transformation. This resulted in a truly blind (from the parameter setting perspective) multiscale regularization framework.
The unmixing problem was formulated using quadratically constrained optimization problems, for which efficient solutions were obtained by exploring their strong duality and a reformulation of their dual representations as root-finding problems.
Simulation results with both synthetic and real data indicate that the proposed strategy leads to a consistent performance improvement when compared to the classical Total Variation regularization, even though no parameter adjustment is necessary.

\bibliographystyle{IEEEtran}
\bibliography{references,references_nonlSppx,ourpapers}

\vspace{-1.15cm}
\begin{IEEEbiographynophoto}{Ricardo Augusto Borsoi (S'18)} 
received the MSc degree in electrical engineering from Federal University of Santa Catarina (UFSC), Florian\'opolis, Brazil, in 2016. He is currently working towards his doctoral degree at Universit\'e C\^ote d'Azur (OCA) and at UFSC. His research interests include image processing, tensor decomposition, and hyperspectral image analysis.
\end{IEEEbiographynophoto}

\vspace{-1.15cm}
\begin{IEEEbiographynophoto}{Tales Imbiriba (S'14, M'17)}   
received his Doctorate degree from the Department of Electrical Engineering (DEE) of the Federal University of Santa Catarina (UFSC), Florian\'opolis, Brazil, in 2016. He served as a Postdoctoral Researcher (2017--2019) at the DEE--UFSC and is currently a Postdoctoral Researcher at the ECE dept. of the Northeastern University, Boston, MA, USA. 
His research interests include audio and image processing, pattern recognition, kernel methods, adaptive filtering, and Bayesian Inference.
\end{IEEEbiographynophoto}

\vspace{-1.15cm}
\begin{IEEEbiographynophoto}{Jos\'e Carlos M. Bermudez (S'78,M'85,SM'02)}
  received the B.E.E. degree from the Federal University of Rio de Janeiro (UFRJ), Rio de Janeiro, Brazil, the M.Sc. degree from COPPE/UFRJ, and the Ph.D. degree from Concordia University, Montreal, Canada, in 1978, 1981, and 1985, respectively.
  He is a Professor at UFSC and at Catholic University of Pelotas (UCPel), Pelotas, Brazil. He has held the position of Visiting Researcher several times for periods of one month at INPT, Toulouse, France, and Université Nice Sophia-Antipolis, France. He spent sabbatical years at the University of California, Irvine (UCI), USA, in 1994, and at INPT, France, in 2012.
  His research interests are in statistical signal processing, including linear and nonlinear adaptive filtering, image processing, hyperspectral image processing and machine learning. Prof. Bermudez served as an Associate Editor of the IEEE TRANSACTIONS ON SIGNAL PROCESSING from 1994 to 1996 and from 1999 to 2001, as an Associate Editor of the EURASIP Journal of Advances on Signal Processing from 2006 to 2010, and as a Senior Area Editor of the IEEE TRANSACTIONS ON SIGNAL PROCESSING from 2015 to 2019.  He is the Chair of the Signal Processing Theory and Methods Technical Committee of the IEEE Signal Processing Society (2019-2020). Prof. Bermudez is a \mbox{Senior Member of the IEEE.}
\end{IEEEbiographynophoto}

\vspace{-1.15cm}
\begin{IEEEbiographynophoto}{Cédric Richard (Senior Member, IEEE)}
received the Dipl.-Ing. and master's and the Ph.D. degree in electrical and computer engineering from Compiègne University of Technology, Compiègne, France, in 1994 and 1998, respectively. He is a Full Professor with the Université Côte d’Azur, Nice, France. During 2010--2015, he was distinguished as a Member of the Institut Universitaire de France. His current research interests include statistical signal processing and machine learning. He is the author of more than 300 papers. Prof. Richard is the Director-at-Large of Region 8 (Europe, Middle East, and Africa) of the IEEE Signal Processing Society (IEEE-SPS) and a Member of the Board of Governors of the IEEE-SPS. He is also the Director of the French Federal CNRS research association Information, Signal, Image et Vision. Since 2019, he has been an Associate Editor for the IEEE OPEN JOURNAL OF SIGNAL PROCESSING, and since 2009, as an Associate Editor for Signal Processing (Elsevier). During 2015--2018, he was a Senior Area Editor for the IEEE TRANSACTIONS ON SIGNAL PROCESSING and an Associate Editor for the IEEE TRANSACTIONS ON SIGNAL AND INFORMATION PROCESSING OVER NETWORKS. He was also an Associate Editor for the IEEE TRANSACTIONS ON SIGNAL PROCESSING during 2006--2010. He is an elected member of the IEEE Signal Processing Theory and Methods Technical Committee during 2009--2014 and 2018--present, and was an elected member of the IEEE Machine Learning for Signal Processing Technical Committee during 2012--2018. 
\end{IEEEbiographynophoto}


\clearpage
\appendices
\section{Supplemental Material: Proof of Theorem~\ref{thm:thm1_duality_ours}}
\label{sec:appendix:strong_duality_proof}

For the sake of brevity, we provide a demonstration for Theorem~\ref{thm:thm1_duality_ours} only for the problem~\eqref{eq:khype_opt_finescale_constr_3}. The demonstration for problem~\eqref{eq:khype_opt_coarse_constr} follows the same ideas and is thus straightforward.
The first part of the proof consists in using the same ideas as in the representer theorem (presented in Section~\ref{sec:kernel}) to obtain a finite dimensional representation of~\eqref{eq:khype_opt_finescale_constr_3} in its primal form. Note that, since $\cp{H}$ is formed by a class of functions defined as linear combinations of $\kappa(\cdot, \widetilde{\bm})$, $\widetilde{\bm}\in \cp{M}$, we can write any candidate $\psi_n\in\cp{H}$, $n=1,\ldots,N$ for solving~\eqref{eq:khype_opt_finescale_constr_3} equivalently 
as
\begin{align}
    \psi_n = \sum_{\ell=1}^L \beta_{n,\ell} \kappa(\cdot,\widetilde{\bm}_{\ell}) + \psi_{n}^\perp \,,
\end{align}
where $\psi_{n}^\perp$ is orthogonal to $\kappa(\cdot,\widetilde{\bm}_{\ell})$, $\ell=1,\ldots,L$, i.e., $\langle \kappa(\cdot,\widetilde{\bm}_{\ell}), \psi_{n}^\perp\rangle_{\mathcal{H}}=0$, and $\beta_{n,\ell}$ are the linear combination coefficients. This allows us to write each position of the vector $\psi_n(\bM)$ as
\begin{align}
    \big[\psi_{n}(\bM)\big]_j & = \psi_{n}(\widetilde{\bm}_{j}) 
    \nonumber\\
    & = \langle\psi_{n},\kappa(\cdot,\widetilde{\bm}_j)\rangle_{\mathcal{H}}
    \nonumber\\
    & = \Big\langle\sum_{\ell=1}^L\beta_{n,\ell}\kappa(\cdot,\widetilde{\bm}_{\ell})+\psi_{n}^\perp,\kappa(\cdot,\widetilde{\bm}_j) \Big\rangle_{\mathcal{H}} 
    \nonumber\\
    & = \sum_{\ell=1}^L\beta_{n,\ell}\langle\kappa(\cdot,\widetilde{\bm}_{\ell}),\kappa(\cdot,\widetilde{\bm}_j)\rangle_{\mathcal{H}}  \,,
\end{align}
for $j=1,\ldots,L$. Thus,
\begin{align}
    \psi_{n}(\bM) = \bK \bbeta_n \,.
\end{align}
Note that $\widehat{\psi}_{\mathcal{C}_n}(\bM)$ in~\eqref{eq:khype_opt_finescale_constr_3} is a constant vector since it was estimated in the previous problem. The objective function can be similarly written as
\begin{align} \label{eq:ObjectiveFunction}
    & \mathop{\arg\min}_{\{\ba_{n},\psi_{n}\}} \,\,\, 
    \frac{1}{2} \sum_{n=1}^N \|\psi_{n}\|_{\mathcal{H}}^2
    \nonumber \\
    ={} & \mathop{\arg\min}_{\{\ba_{n},\psi_{n}\}} \,\,\, 
    \frac{1}{2} \sum_{n=1}^N \bbeta_n^\top\bK\bbeta_n + \frac{1}{2} \sum_{n=1}^N \|\psi_{n}^\perp\|_{\mathcal{H}}^2 \,.
\end{align}
Since the constraints do not depend on $\psi_{n}^\perp$, the second term of \eqref{eq:ObjectiveFunction} is irrelevant to the problem, and will be equal to zero for any solution to the optimization problem. Thus, we can rewrite~\eqref{eq:khype_opt_finescale_constr_3} as a finite dimensional optimization problem
\begin{align} \label{eq:finescale_prob_appdx_ii}
    & \mathop{\arg\min}_{\{\ba_{n},\bbeta_{n}\}} \,\,\, 
    \frac{1}{2} \sum_{n=1}^N \bbeta_n^\top\bK\bbeta_n
    \\
    & \text{subject to } \ba_{n} \geq \cb{0} \,, \,\, \cb{1}^\top\ba_{n} = 1 \,,
    \,\,\,\, n=1,\ldots,N
    \nonumber \\ & \hspace{1.5cm}
    \bxi_{n} {}={} \by_{n} - \bM \ba_{n} - \bK\bbeta_n \,,
    \,\,\,\, n=1,\ldots,N
    \nonumber \\
    & \hspace{1.5cm} \frac{1}{2}\sum_{n=1}^N\|\bxi_n\|_2^2 
    {}={} \frac{N}{2} C_1 \,,
    \nonumber \\
    & \hspace{1.5cm}  \frac{1}{2}\sum_{n=1}^N
    \big(\|\ba_n-\widehat{\ba}_{\mathcal{D}_n}\|_2^2 + \|\bxi_{\psi,n}\|^2 \big)
	{}={} \! \frac{N}{2} (C_Y - C_E) \,,
    \nonumber \\ \nonumber 
    & \hspace{1.5cm}  \bxi_{\psi,n} = \bM^{\dagger}\big(\bK\bbeta_n  - \widehat{\psi}_{\mathcal{C}_n}(\bM)\big) \,,
    \,\,\, n=1,\ldots,N \,.
\end{align}
This derivation was just to show that we can write the primal version of the optimization problem in finite dimension. The finite dimensional Lagrangian is given by
\begin{align} \label{eq:finescale_prob_appdx_ii_lagrangian}
    \widetilde{\mathcal{J}}_{\calD} {}={} \! & \sum_{n=1}^N \bigg\{
    \frac{1}{2} \bbeta_n^\top\bK\bbeta_n
    + \frac{\mu_1}{2}\Big(\|\bxi_n\|_2^2 - C_1\Big)
    + \lambda_n(\cb{1}^\top\ba_{n} - 1)
    \nonumber \\ &
    + \frac{\mu_2}{2} \Big(
    \|\ba_n-\widehat{\ba}_{\mathcal{D}_n}\|_2^2 - C_Y
    + \|\bxi_{\psi,n}\|^2 
    + C_E \Big)
    \nonumber \\ &
    + \bmu_{3,n}^\top \Big(\bM^{\dagger}(\bK\bbeta_n  - \widehat{\psi}_{\mathcal{C}_n}(\bM)\big) - \bxi_{\psi,n} \Big)
    \nonumber \\ &
    - \bbeta_{n}^\top \big(\bxi_{n}-\by_{n} + \bM\ba_n + \bK\bbeta_n \big)
    %
    - \bgamma_n^\top\ba_n
    \bigg\} \,.
\end{align}

Now, we can write~\eqref{eq:finescale_prob_appdx_ii} equivalently by splitting each quadratic equality constraints in two quadratic inequality ones, one being convex and the other non-convex. This results in the following problem:
\begin{align} \label{eq:finescale_prob_appdx_iii}
    & \min_{\bx\in \mathscr{A}} \,\,\, f(\bx)
    \\
    & \text{subject to } g_1(\bx) \leq 0 \,, \quad g_2(\bx) \leq 0 \,,
    \nonumber\\
    & \hspace{1.5cm} g_3(\bx) \leq 0 \,, \quad g_4(\bx) \leq 0 \,,
    \nonumber\\\nonumber
    & \hspace{1.5cm} h_{n,p}(\bx) \leq 0 \,, \, n=1,\ldots,N \,, \, p=1,\ldots,P \,,
\end{align}
where $\bx=[\bx_1^\top,\ldots,\bx_N^\top]^\top$, with $\bx_n=[\ba_n^\top,\bbeta_n^\top,\bxi_n^\top,\bxi_{\psi,n}^\top]^\top$, $\mathscr{A}$ is an affine manifold that encapsulates the equality constraints, functions $f(\bx)$, $g_i(\bx)$ and $h_{i,j}(\bx)$, for all $i$ and $j$ are quadratic and defined as
\begin{align}
    f(\bx) \equiv{} & \frac{1}{2} \sum_{n=1}^N \bbeta_n^\top\bK\bbeta_n \,,
    \\
    g_1(\bx) \equiv{} & \frac{1}{2}\sum_{n=1}^N\|\bxi_n\|_2^2 - \frac{N}{2} C_1 \,,
    \\
    g_2(\bx) \equiv{} & \frac{1}{2}\sum_{n=1}^N  \big(\|\ba_n-\widehat{\ba}_{\mathcal{D}_n}\|_2^2 + \|\bxi_{\psi,n}\|^2 \big) 
    \nonumber \\ &  - \frac{N}{2} (C_Y - C_E) \,,
    \\
    g_3(\bx) ={} & - g_1(\bx) \,,
    \\
    g_4(\bx) ={} & - g_2(\bx) \,,
    \\
    h_{n,p}(\bx) \equiv{} & -a_{n,p} \,.
\end{align}
Note that the constraints involving $g_3(\bx)$ and $g_4(\bx)$ are nonconvex, while the remaining ones are all convex. Note also that this problem does not satisfy the Slater condition, what imposes difficulties in the analysis and precludes the consideration of works that rely on this hypothesis (e.g.,~\cite{ai2009strongDualityQPQC_SDP,bomze2015copositiveRelaxationBeatsLagrangian}).
The Lagrangian of this problem is 
\begin{align} \label{eq:finescale_prob_appdx_iii_lagrangian_i}
    \mathcal{L}_1(\bx,{\bu},\bzeta) ={} & f(\bx) + \sum_{n,p} u_{(n-1)P+n+4} \,h_{n,p}(\bx)
    \nonumber \\[-0.2cm] &  + \sum_{j=1}^4 u_j\, g_j(\bx)  + L_{\mathscr{A}}(\bx,\bzeta) \,,
\end{align}
where $\bu=[u_{1},\ldots,u_{NP+4}]^\top$, with $u_{j}\in\mathbb{R}_+$ and $\bzeta\in\mathbb{R}_+^{(L+P+1)N}$ are the Lagrange multipliers, and $L_{\mathscr{A}}(\bx,\bzeta)$ are the terms in the Lagrangian associated with the linear equality constraints.

Since the constraints involving $g_1(\bx)$ and $g_3(\bx)$, and $g_2(\bx)$ and $g_4(\bx)$ are linearly dependent, we have
\begin{align}
    u_1\,g_1(\bx)+u_3\,g_3(\bx) & = (u_1-u_3) g_1(\bx) \,,
    \\
    u_2\,g_2(\bx)+u_4\,g_4(\bx) & = (u_2-u_4) g_2(\bx) \,.
\end{align}
Thus, we can define new variables $v_1=(u_1-u_3)\in\mathbb{R}$ and $v_2=(u_2-u_4)\in\mathbb{R}$ and rewrite~\eqref{eq:finescale_prob_appdx_iii_lagrangian_i} as
\begin{align} \label{eq:finescale_prob_appdx_iii_lagrangian_ii}
    \mathcal{L}_2(\bx,\widetilde{\bu},\bzeta,v_1,v_2) ={} & f(\bx) + \sum_{n,p} u_{(n-1)P+n+4} + v_1\,g_1(\bx)
    \nonumber \\ & + v_1\,g_2(\bx) + L_{\mathscr{A}}(\bx,\bzeta) \,,
\end{align}
where $\widetilde{\bu}=[u_{5},\ldots,u_{NP+4}]^\top$. Since we can always find nonnegative $u_j$, $j=1,\ldots,4$ satisfying ${v}_1=(u_1-u_3)$ and ${v}_2=(u_2-u_4)$, the optimization of the Lagrangian in~\eqref{eq:finescale_prob_appdx_iii_lagrangian_ii} is equivalent to the optimization of the original one in~\eqref{eq:finescale_prob_appdx_ii_lagrangian} (subject to the corresponding nonnegativity constraints). Thus, this means that strong duality of problem~\eqref{eq:finescale_prob_appdx_iii} implies that problem~\eqref{eq:finescale_prob_appdx_ii} has strong duality too.

In order to study strong duality of~\eqref{eq:finescale_prob_appdx_iii}, we resort to the following theorem, originally stated in~\cite{tuy2013strongDualityQPQC}.
\begin{theorem}{\cite[Theorem 6]{tuy2013strongDualityQPQC}}
\label{thm:duality_theorem_TUY}
In~\eqref{eq:finescale_prob_appdx_iii}, let 
\begin{align}
    L(\bx,\bu) = f(\bx) + \sum_{j=1}^4 u_j\,g_j(\bx) + \sum_{n,p} u_{(n-1)P+n+4} \,h_{n,p}(\bx)
    \nonumber
\end{align}
and assume that the concave function $l(\bu):\bu\mapsto\inf_{\bx\in\mathscr{A}}L(\bx,\bu)$ attains its maximum at a point $\bu^*\in\mathbb{R}_+^{NP+4}$ such that $L(\bx,\bu^*)$ is strictly convex on the set $\mathscr{A}$. More precisely, 
\begin{align}
    \nonumber
    \exists \bu^*\in\mathop{\arg\max}_{\bu\in\mathbb{R}_+^{NP+4}} \big\{\inf_{\bx\in\mathscr{A}}L(\bx,\bu)\big\} : \Hess \big[L(\bx,\bu^*)\big]\succ\cb{0} \,,
\end{align}
where $\Hess(\cdot)$ is the Hessian operator. Then, strong \mbox{duality holds.}
\end{theorem}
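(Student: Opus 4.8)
The plan is to prove the theorem directly from the structure of the Lagrangian dual, exploiting that $f$, the $g_j$ and the $h_{n,p}$ are all quadratic and that the positive‑definiteness hypothesis is imposed exactly at the dual optimizer $\bu^*$. Let $p^*$ denote the primal optimal value of~\eqref{eq:finescale_prob_appdx_iii}, and recall that $l(\bu)=\inf_{\bx\in\mathscr{A}}L(\bx,\bu)$ is concave, being an infimum of functions affine in $\bu$, and obeys weak duality $l(\bu)\leq p^*$ for every $\bu\geq\cb{0}$. It therefore suffices to produce a point $\bar{\bx}\in\mathscr{A}$ that is feasible for~\eqref{eq:finescale_prob_appdx_iii} and satisfies $f(\bar{\bx})=l(\bu^*)$: combined with weak duality this forces $l(\bu^*)=p^*=f(\bar{\bx})$, which is exactly the assertion of strong duality.

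First I would construct the candidate $\bar{\bx}$. Because every objective and constraint function is quadratic, $L(\bx,\bu)$ is quadratic in $\bx$, and its Hessian restricted to the linear subspace parallel to $\mathscr{A}$ equals $\Hess[f]+\sum_{j=1}^4 u_j\,\Hess[g_j]+\sum_{n,p}u_{(n-1)P+n+4}\,\Hess[h_{n,p}]$, which is affine (hence continuous) in $\bu$ and independent of $\bx$. By hypothesis this restricted Hessian is positive definite at $\bu=\bu^*$, so $L(\cdot,\bu^*)$ is strictly convex and coercive on the affine manifold $\mathscr{A}$ and attains a unique minimizer $\bar{\bx}$, with $L(\bar{\bx},\bu^*)=l(\bu^*)$.

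Next I would establish that $\bar{\bx}$ is primal feasible and satisfies complementary slackness, which is the heart of the argument. By continuity of the Hessian in $\bu$, positive‑definiteness on $\mathscr{A}$ persists on a neighborhood $U$ of $\bu^*$; hence for each $\bu\in U$ the problem $\min_{\bx\in\mathscr{A}}L(\bx,\bu)$ has a unique minimizer $\bar{\bx}(\bu)$, obtained by solving a nonsingular linear stationarity system, so $\bar{\bx}(\bu)$ depends smoothly on $\bu$ and $l(\bu)=L(\bar{\bx}(\bu),\bu)$ is differentiable on $U$. The envelope theorem then identifies the gradient $\nabla_{\bu}l(\bu^*)$ with the vector of constraint values at $\bar{\bx}$, namely $g_j(\bar{\bx})$ for $j=1,\ldots,4$ and $h_{n,p}(\bar{\bx})$. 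Since $\bu^*$ maximizes the concave differentiable $l$ over $\mathbb{R}_+^{NP+4}$, the first‑order conditions read $\partial l/\partial u_k(\bu^*)\leq 0$, with equality whenever $u_k^*>0$; reading these off coordinatewise gives precisely primal feasibility ($g_j(\bar{\bx})\leq 0$ and $h_{n,p}(\bar{\bx})\leq 0$, so that $\bar{\bx}\in\mathscr{A}$ is feasible) together with complementary slackness ($u_j^*\,g_j(\bar{\bx})=0$ and the analogous relations for the $h_{n,p}$).

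Finally, complementary slackness collapses the Lagrangian onto the objective, giving $l(\bu^*)=L(\bar{\bx},\bu^*)=f(\bar{\bx})$, while feasibility yields $f(\bar{\bx})\geq p^*$; chaining with weak duality $l(\bu^*)\leq p^*$ produces $l(\bu^*)=p^*=f(\bar{\bx})$, which proves strong duality and incidentally exhibits $\bar{\bx}$ as a primal optimum. I expect the main obstacle to be the third step: rigorously justifying that $l$ is differentiable at $\bu^*$ with the envelope gradient. This requires upgrading the pointwise positive‑definiteness at $\bu^*$ to a whole neighborhood and arguing that the minimizer map $\bar{\bx}(\bu)$ is single‑valued and continuous there — exactly the place where unboundedness of $\mathscr{A}$ could otherwise destroy attainment of the infimum, and where the quadratic structure (constant Hessians and linear stationarity systems) is what rescues the argument.
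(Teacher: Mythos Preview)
The paper does not prove this theorem: it is quoted verbatim as \cite[Theorem~6]{tuy2013strongDualityQPQC} and then \emph{applied} to establish Theorem~\ref{thm:thm1_duality_ours}. There is thus no proof in the paper to compare your attempt against; the authors simply invoke the cited result and check its hypothesis (that $L(\bx,\bu^*)$ is strictly convex on $\mathscr{A}$) by showing that $u_1^*-u_3^*=\mu_1^*>0$ and $u_2^*-u_4^*=\mu_2^*>0$ force the combined quadratic form to be positive definite.

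That said, your argument is a correct and self-contained proof of the cited theorem. The three ingredients---(i) the unique inner minimizer $\bar{\bx}(\bu)$ exists and varies smoothly near $\bu^*$ because the (restricted) Hessian is constant in $\bx$, continuous in $\bu$, and positive definite at $\bu^*$; (ii) the envelope theorem identifies $\partial l/\partial u_k(\bu^*)$ with the $k$-th constraint value at $\bar{\bx}$; and (iii) the first-order optimality conditions for the concave maximization of $l$ over the nonnegative orthant yield feasibility and complementary slackness---combine exactly as you describe to give $l(\bu^*)=f(\bar{\bx})\geq p^*$, which together with weak duality closes the gap. The step you flag as the likely obstacle (differentiability of $l$ at $\bu^*$) is indeed the only nontrivial one, and your justification via the nonsingular linear stationarity system on $\mathscr{A}$ is the right way to handle it in this quadratic setting. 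This is essentially the route taken in the original reference, so your reconstruction is faithful even though the present paper never reproduces it.
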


In order to apply this to our problem, we note that since $h_{n,p}(\bx)$ are linear functions, it is sufficient that $u_1-u_3>0$ and $u_2-u_4>0$ for ${L}(\bx,\bu)$ to be strictly convex. Moreover, under the change of variables $\mu_1=(u_1-u_3)\in\mathbb{R}$ and $\mu_2=(u_2-u_4)\in\mathbb{R}$, maximization of $l(\bu)$ in Theorem~\eqref{thm:duality_theorem_TUY} becomes equivalent to the dual problem~\eqref{eq:dual_problem2_i} (due to the equivalence we showed between~\eqref{eq:lagrangian_finescale_i} and~\eqref{eq:finescale_prob_appdx_iii_lagrangian_ii}). 
Since we assumed that the optimal solutions $\mu_1^*$ and $\mu_2^*$ to~\eqref{eq:dual_problem2_i} are strictly positive, this means that $u_1^*-u_3^*>0$ and $u_2^*-u_4^*>0$ and thus ${L}(\bx,\bu^*)$ is strictly convex. Therefore, strong duality holds.


\section{Supplemental material: Alternative image reconstruction error metrics}

In addition to the RMSE, we considered three other quality measures typically employed with hyperspectral images to evaluate reconstructions of real data sets: the spectral angle mapper (SAM), the spectral information divergence (SID)~[S1], and a spectral-spatial similarity measure that considers the spatial neighborhood of each pixel, named IPD~[S2]. The results for the real datasets are shown in Table~\ref{tab:otherIMreconstructionMetrics}. These results do not lead to a significantly different qualitative conclusion regarding the best solution.

\begin{itemize}
    \item[{[S1]}] C.-I. Chang, ``Spectral  information  divergence  for  hyperspectral  image  analysis,''  in \textit{Proc. of the IEEE International Geoscience and Remote Sensing Symposium}, 1999, pp. 509--511.
    
    \item[{[S2]}] H. Pu, Z. Chen, B. Wang, and G.-M. Jiang, ``A novel spatial–spectral similarity measure for dimensionality reduction and classification of hyperspectral imagery,'' \textit{IEEE Transactions on Geoscience and Remote Sensing}, vol. 52, no. 11, pp. 7008--7022, 2014.
\end{itemize}

\begin{table}[]
\centering
\renewcommand{\arraystretch}{1.2}
\caption{Different metrics for the image reconstruction on real data. Best results are bold red, second best are bold blue.}
\label{tab:otherIMreconstructionMetrics}
\begin{tabular}{c|cccccccccccccccccccc}
\hline
& \multicolumn{4}{c}{Cuprite}\\
\hline
	&	RMSE	&	SAM	&	SID	&	IDP	\\
\hline
FCLS	&	0.0107	&	0.0285	&	0.00051	&	0.1347	\\
K-Hype	&	\cblue{\bf 0.0082}	&	\cblue{\bf 0.0227}	&	\cblue{\bf 0.00029}	&	\cblue{\bf 0.1085}	\\
K-Hype-TV	&	\cblue{\bf 0.0082}	&	0.0228	&	\cblue{\bf 0.00029}	&	0.1090	\\
CDA-NL	&	0.0095	&	0.0259	&	\cred{\bf 0.00031}	&	0.1284	\\
NDU	&	\cred{\bf 0.0080}	&	\cred{\bf 0.0220}	&	\cred{\bf 0.00031}	&	\cred{\bf 0.1035}	\\
BMUA-N	&	0.0090	&	0.0245	&	0.00035	&	0.1184	\\
\hline
& \multicolumn{4}{c}{Jasper Ridge}			\\					
\hline
	&	RMSE	&	SAM	&	SID	&	IDP	\\
\hline
FCLS	&	0.0225	&	0.0681	&	0.00261	&	0.2677	\\
K-Hype	&	0.0562	&	\cblue{\bf 0.0603}	&	0.00254	&	0.5859	\\
K-Hype-TV	&	\cred{\bf 0.0164}	&	0.0604	&	0.00240	&	\cred{\bf 0.2011}	\\
CDA-NL	&	0.0168	&	0.0643	&	\cblue{\bf 0.00228}	&	0.2040	\\
NDU	&	\cblue{\bf 0.0167}	&	\cred{\bf 0.0551}	&	\cred{\bf 0.00163}	&	\cblue{\bf 0.2038}	\\
BMUA-N	&	0.0170	&	0.0624	&	0.00235	&	0.2071	\\
\hline
& \multicolumn{4}{c}{Urban} \\		
\hline
	&	RMSE	&	SAM	&	SID	&	IDP	\\
\hline
FCLS	&	0.0228	&	0.0564	&	0.00081	&	0.2098	\\
K-Hype	&	0.0434	&	0.0346	&	0.00030	&	0.3670	\\
K-Hype-TV	&	\cred{\bf 0.0060}	&	\cblue{\bf 0.0339}	&	\cblue{\bf 0.00029}	&	\cred{\bf 0.0718}	\\
CDA-NL	&	0.0063	&	\cred{\bf 0.0331}	&	\cred{\bf 0.00028}	&	0.0744	\\
NDU	&	0.0159	&	0.0447	&	0.00051	&	0.1540	\\
BMUA-N	&	\cblue{\bf 0.0062}	&	0.0343	&	0.00030	&	\cblue{\bf 0.0739}	\\
\hline
\end{tabular}
\end{table}



\end{document}